\newcolumntype{L}[1]{>{\raggedright\let\newline\\\arraybackslash\hspace{0pt}}m{#1}}
\newcolumntype{C}[1]{>{\centering\let\newline\\\arraybackslash\hspace{0pt}}m{#1}}
\newcolumntype{R}[1]{>{\raggedleft\let\newline\\\arraybackslash\hspace{0pt}}m{#1}}
\newtheorem{theorem}{\textbf{Theorem}}
\newtheorem{lemma}{\textbf{Lemma}}
\newtheorem{proposition}{Proposition}
\newcommand{\myPrb}{{economic dispatching} }
\newcommand{\paonline}{\textsf{PA-Online} }
\newcommand{\poonline}{\textsf{PO-Online} }
\newcommand{\offline}{\textsf{OFFLINE} }
\newcommand{\mG}{{microgrid} }
\newcommand{\myScl}{0.20}
\newcommand{\algc}{\textsf{NRBF} }
\begin{document}

\conferenceinfo{ACM e-Energy}{'15 Bangalore, India}

\title{Peak-Aware Online Economic Dispatching for Microgrids}

\numberofauthors{5} 
%
\author{
%
%
\alignauthor
Ying Zhang\\
       \affaddr{Information Engineering}\\
       \affaddr{The Chinese University of Hong Kong}
\alignauthor
Mohammad H. Hajiesmaili\\
       \affaddr{Institute of Network Coding}\\
       \affaddr{The Chinese University of Hong Kong}
\alignauthor
Sinan Cai\\
       \affaddr{School of Electrical Engineering}\\
       \affaddr{Xi'an Jiao Tong University}     
\and
\alignauthor
Minghua Chen\\
       \affaddr{Information Engineering}\\
       \affaddr{The Chinese University of Hong Kong}
\alignauthor 
Qi Zhu\\
       \affaddr{Department of Electrical and Computer Engineering}
       \affaddr{University of California, Riverside}
}

\maketitle

\begin{abstract}
\sloppy{By employing local renewable energy sources and power generation units
while connected to the central grid, microgrid can usher in great
benefits in terms of cost efficiency, power reliability, and environmental
awareness.
Economic dispatching is a central problem in microgrid operation, which aims at effectively scheduling various energy sources to minimize the operating cost while satisfying the electricity demand. Designing intelligent economic dispatching strategies for microgrids, however, is drastically different from that for conventional central grids, due to two unique challenges. First, the erratic renewable energy emphasizes the need for online algorithms. Second, the widely-adopted peak-based pricing scheme brings out the need for new peak-aware strategy design. In this paper, we tackle these critical challenges and devise peak-aware online economic dispatching algorithms. For microgrids with fast-responding generators, we prove that our deterministic and randomized algorithms achieve the best possible competitive ratios
$2-\beta$ and $e/(e-1+\beta)$, respectively, where $\beta\in[0,1]$ is the ratio between the minimum grid spot price and the local-generation price.
Our results characterize the fundamental \emph{price of uncertainty} of the problem. For microgrids with slow-responding generators, we first show that a large competitive ratio is inevitable. Then we leverage limited prediction of electricity demand and renewable generation to improve the competitiveness of the algorithms. By extensive empirical evaluations using real-world traces, we show that our online algorithms achieve near offline-optimal performance. In a representative scenario, our algorithm achieves $23\%$ and $11\%$ cost reduction as compared to the case without local generation units and the case using peak-oblivious algorithms, respectively.
}
\end{abstract}

\category{C.4}{Performance of Systems}{Modeling techniques; Design studies}
\category{F.1.2}{Modes of Computation}{Online computation}
\category{I.2.8}{Problem Solving, Control Methods, and Search}{Scheduling}
\keywords{Microgrids, Online Algorithm, Peak-Aware Scheduling, Economic Dispatching}

\section{Introduction}\label{sec:intro}
Microgrid represents a promising paradigm of future electric power systems that autonomously coordinate distributed renewable energy source (\textit{e.g.}, solar PVs), local generation unit (\textit{e.g.}, gas generators), and the external grid
to satisfy time-varying energy demand of a local community.
%
%
As compared to traditional grids, microgrid has recognized advantages in cost efficiency, environmental awareness, and power reliability. Consequently, worldwide installed microgrid capacity has witnessed a phenomenon growth, reaching 866 MW in 2014, and is expected to reach 4,100 MW by 2020~\cite{2023microgrid}.


Energy generation scheduling in microgrid determines the power output
level of local generation units and power to be procured from external grid, with the goal of
minimizing the total cost over a pre-determined billing cycle. The scheduling
plan should meet the time-varying energy demand and respect physical
constraints of the generation units. Such problem has been studied extensively
in the power system literature for large-scale traditional grids. Two main variants are unit commitment
\cite{kazarlis1996genetic} and economic dispatching \cite{gaing2003particle}
problems. The unit commitment problem typically optimizes the start-up
and the shut-down schedule of power generation units, whereas the economic
dispatching problem optimally schedules the output levels given the
on/off status as the input parameters. In this paper, we focus on
economic dispatching problem in microgrid scenarios.

At first glance, \myPrb in \mG may appear to be a small-scale version
of the classical urban-wide \myPrb problem. However, the following two unprecedented challenges make the problem fundamentally different, thereby the previous solutions inapplicable.

$\vartriangleright$ \textbf{Uncontrollable, intermittent, and uncertain energy sources.} Classical scheduling strategies rely on accurate prediction of future demand and dispatch-able supply \cite{gaing2003particle}. In microgrids, however, the renewable sources are highly uncontrollable (not available on-demand), intermittent (irregular fluctuations), and uncertain (hard to predict accurately). Incorporating a large fraction of such renewable energy sources makes conventional strategies not applicable, and calls for new \textit{online} scheduling strategies that do not rely on accurate prediction of demand and renewable generation~\cite{narayanaswamy2012online,minghua_sigmetrics}.

$\vartriangleright$ \textbf{Peak-based charging model of the external
grid.} The real-world pricing scheme for consumers with large loads (such as universities or data centers) adopts
a hybrid time-of-use and \textit{peak-based} charging model where
the electricity bill consists of both the total energy usage and the
peak demand drawn over the billing cycle. The motivation is to
encourage large customers to smooth their demand, thereby the utility provider can reduce its planned capacity obligations. The peak price is
often more than 100 times higher than the maximum (on-peak) spot price, \textit{e.g.}, $118$ times for PG\&E~\cite{pge}, and $227$ times for Duke Energy Kentucky~\cite{duke}~\footnote{In practice, the unit of peak price is \$/KW while the unit of spot price is \$/KWh. This estimation is obtained by assuming the peak demand lasts one hour.}.
Consequently, the contribution
of peak charge in the electricity bill for a typical costumer can be considerable, \textit{e.g.}, from 20\% to 80\% for several Google data centers \cite{xu2013reducing}. These observations
suggest that economic dispatching strategies with peak-based charging model taken
into account (referred to as peak-aware economic dispatching) may substantially
reduce the total operating costs for microgrids as compared to economic dispatching
strategies oblivious to peak-based charging (referred to as peak-oblivious
economic dispatching). This is indeed the case as verified by our real-world trace-driven
evaluation in Sec.~\ref{sec:exp}. 

All previous researches on \mG economic dispatching, that we are aware of and
review in Sec.~\ref{sec:relatedwork}, adopt a peak-oblivious cost
model, wherein the costumer bill is computed by total energy usage
following a time-of-use pricing scheme. To the best of our knowledge,
this work is the first that addresses the peak-aware \myPrb problem
using competitive online algorithms in microgrid scenario.
The main contributions of the paper are summarized as follows:

$\vartriangleright$ We identify and formulate the peak-aware economic dispatching problem of minimizing
the operating cost for microgrids under the hybrid time-of-use
and peak-based pricing scheme in Sec.~\ref{sec:formulation}. Notably, two aforementioned challenges change the structure of the problem fundamentally (see the discussions in Sec.~\ref{part:criticalCapa} for an example) and call for different online algorithm design.

$\vartriangleright$ In Sec.~\ref{sec:fast}, we focus on ``fast-responding'' generator scenario, where the \textit{ramping} constraints (\textit{i.e.,} the maximum change in output level over successive steps) of local generators are inactive.
We follow a \textit{divide-and-conquer} approach and decompose the problem into multiple sub-problems, solve the sub-problems by their ``rent-or-buy'' nature, and then combine the solutions to obtain a solution for the original problem. We then demonstrate that the competitive ratios of our algorithms are $\left(2-\beta\right)$ and $e/\left(e-1+\beta\right)$ for deterministic and randomized versions respectively,
 where $\beta\in \left[0,1\right]$ is the ratio between the minimum grid spot price and the generator price. We prove that the ratios are the best possible. As such, these results characterize the fundamental \emph{price of uncertainty} for the problem.


$\vartriangleright$ For ``slow-responding'' generator scenario in Sec.~\ref{sec:slow}, where the ramping constraints are active, we firstly show that a large competitive ratio is inevitable without any future information. We then design an online algorithm with a small competitive ratio by taking the advantage of sufficient looking-ahead information.
Our results suggest looking-head as a useful mechanism to \textit{\emph{neutralize the ramping constraints in online algorithm design}}.

$\vartriangleright$ In Sec.~\ref{sec:exp}, by extensive evaluations using real-world traces, we show that our online algorithms can achieve satisfactory empirical performance. Furthermore, our \emph{peak-aware} online algorithms achieve near offline-optimal performance, and outperform the \textit{peak-oblivious} designs \cite{narayanaswamy2012online,minghua_sigmetrics} under various settings. 
The substantial cost reduction shows the benefit and necessity of designing peak-aware strategies for economic dispatching in microgrids.

\section{Problem Formulation}\label{sec:formulation}
\sloppy{
In the microgrid \myPrb problem, the objective is to orchestrate various energy sources to minimize the operating cost while satisfying the electricity demand.
}

We consider one billing cycle, which is a finite time horizon set ${\mathcal{T} = \{1,\dots,T\}}$ with $T$ discrete time slots. In practice, the duration of one cycle is usually one month and the length of each time slot is $15$ minutes~\cite{pge}. The key notations used in this paper are defined in Table~\ref{tbl:not}. 
\begin{table}[!htp]
\caption{Key notations}\label{tbl:not}
\centering
\begin{tabular}{|c|L{6cm}|}
\hline \textbf{Notation} & \textbf{Definition} \\
\hline $T$ & The total number of time slots \\
 $\mathcal{T}$ & The time slot set\\
 \hline\hline
 $e(t)$ & The net electricity demand \\
 $u(t)$ & The electricity level obtained from local generators \\
 $v(t)$ & The electricity level obtained from electricity grid \\
\hline\hline $p_e(t)$ & The spot price of the electricity from grid at time $t, p_e^{\min} \leq p_e(t) \leq p_e^{\max}$,  (\$/\textrm{KWh})\\
 $p_g$ & The unit cost of the electricity by local generators (\$/\textrm{KWh})\\
 $p_m$ & The peak demand price of the electricity grid (\$/\textrm{KWh})\\
\hline \hline
 $R^\textsf{u}$ & The maximum ramping up rate of local generator\\
 $R^\textsf{d}$ & The maximum ramping down rate of local generator \\
  $C$ & Local generator capacity \\
\hline
\end{tabular}
\end{table}

\textbf{Net electricity demand.} We consider arbitrary renewable energy generation. Let $e(t)$ be the net electricity demand in time slot $t$, \textit{i.e.}, the total electricity demand subtracted by the renewable generation. For ease of presentation and discussion, we assume $e(t)$ only takes nonnegative integer values. Note that we do not assume any specific stochastic model of $e(t)$.

\textbf{Local generation.} There are local generators deployed in the microgrid with $C$ total generation capacity, \emph{i.e.}, they can jointly satisfy at most $C$ units of electricity demand. We consider a practical setting where the generator's incremental power output in two consecutive slots is limited by the \textit{ramping-up} and \textit{ramping-down} constraints $R^\textsf{u}$ and $R^\textsf{d}$, respectively. Most microgrids today employ small-capacity generators that are powered by gas turbines or diesel engines. These generators are ``fast-responding'' in the sense that they have large ramping-up/-down rates. Meanwhile, there are also ``slow-responding'' generators with small ramping-up/-down rates. We denote $p_g$ as the cost of generating unit electricity using local generation.

\textbf{Electricity from the external grid.} The microgrid can also obtain electricity supply from the external
grid for unbalanced electricity demand in an on-demand manner. We denote the spot price at time $t$ from the external
grid as $p_e(t)$. We assume that $p_e(t)\geq p_e^{\min} \geq 0$~\footnote{We remark that the electricity spot price can sometime be negative in practice  \cite{fanone2013case}. 
We restrict our attention to the case with $p_e(t)\geq 0$ in this study and leave the general case with negative price to future work.}. Again,
we do not assume any stochastic model of $p_e(t)$. For ease of discussion later, we define $\beta \triangleq p_e^{\min} / p_g$ as the ratio between the minimum grid price and the unit cost of local generation.

\textbf{Cost model.} The microgrid operating cost in $\mathcal{T}$ includes the expense of purchasing electricity from the external grid and that of local generation. Let $v(t)$ be the amount of electricity purchased from the external grid and $u(t)$ be the amount of electricity generated locally. 

The cost of grid electricity consists of volume charge and peak charge.
The volume charge is simply the sum of volume cost in all the time slots, \textit{i.e.}, $\sum_t p_e(t)v(t)$.
In practice, the peak charge is based on the maximum single-slot power and the peak price unit is \$/\textrm{KW}~\cite{pge}, which is different from the spot price unit \$/\textrm{KWh}.
Let the peak price in $\$/\textrm{KW}$ be $\tilde{p}_m$ and the length of one time slot be $\delta$ (\textit{e.g.}, $0.25$ hour), we convert the peak price to \$/\textrm{KWh} as $p_m = \tilde{p}_m/\delta$. Consequently, the peak charge is $p_m\max_tv(t)$, \textit{i.e.}, the peak demand over the billing cycle (in \textrm{KWh}) multiplied by $p_m$ (in \$/\textrm{KWh}). This method is similar to the one used in \cite{xu2013reducing}. We remark that $p_m$ is usually more than 100 times larger than $p_e(t)$~\cite{pge}.


For local generation, the cost of a generator to generate $\theta$ amount of electricity is commonly modeled as a quadratic function~\cite{kazarlis1996genetic}, \emph{i.e.}, say, $a\theta^2+b\theta+c$. The coefficient $a$ is usually orders of magnitude smaller than $b$ (\textit{e.g.}, for a typical oil generator with capacity 15MW, $a=0.007, b=48.5$) \footnote{This can be further verified by more examples from \scriptsize{http://pscal.ece.gatech.edu/archive/testsys/generators.html}.}. Consequently, for small-capacity generators employed in microgrids, the quadratic term $a\theta^2$ is usually much smaller than the linear term $b\theta$ and is negligible.
Let $p_g$ be the unit generating cost. The total local generation cost is simply $\sum_t p_g u(t)$. In this study, we focus on the case where $p_g\geq p_e(t)$, $\forall t\in \mathcal{T}$ \footnote{It means generating one unit of electricity locally is no cheaper than purchasing it from the external grid. The approaches and results developed in this paper can be extended to the general case where $p_g$ can be lower than $p_e(t)$.}.



Putting together all the components, the microgird total operating cost over a billing cycle is given by
\begin{equation}
\label{eq:cost}
 \textsf{Cost}(\boldsymbol u, \boldsymbol v) = \underbrace{\sum_{t\in \mathcal{T}} p_e(t) v(t) + p_m \max_{t\in \mathcal{T}} v(t)}_{\textrm{by external grid}}+\underbrace{\sum_{t\in \mathcal{T}}p_g u(t)}_{\textrm{by local generators}}.
\end{equation}

Existing microgrid generation scheduling schemes~\cite{narayanaswamy2012online,minghua_sigmetrics} did not consider the peak charge term $p_m\max_tv(t)$; we refer to these schemes as \textbf{Peak-Oblivious}. In this paper, we consider the \textbf{Peak-Aware Economic Dispatching} (\textbf{PAED}) problem as follows
\begin{subequations}
\begin{eqnarray}
\textbf{PAED}\quad \min_{\boldsymbol u, \boldsymbol v} &&\textsf{Cost}(\boldsymbol u, \boldsymbol v)
\nonumber\\
\textrm{s.t.}
& & u(t)+v(t)\geq e(t), \quad t\in\mathcal{T},\label{equ:e_demand}\\
& & u(t)\leq C,  \quad t\in\mathcal{T}, \label{equ:cap}\\
& & u(t+1)-u(t)\leq R^\textsf{u}, \quad t\in\mathcal{T},\label{equ:ramping_u}\\
& & u(t)-u(t+1)\leq R^\textsf{d}, \quad t\in\mathcal{T},\label{equ:ramping_d}\\
\textrm{var.} & & u(t), v(t) \in \mathbb{R}^{+}, \quad  t\in\mathcal{T}.  \nonumber
\end{eqnarray}
\end{subequations}
The constraint in~\eqref{equ:e_demand} ensures that the electricity demand is satisfied. The constraint in~\eqref{equ:cap} is due to the generator capacity limitation. The constraints in~\eqref{equ:ramping_u}-\eqref{equ:ramping_d} reflect the ramping up/down constraints, respectively.

The objective function $\textsf{Cost}(\boldsymbol u, \boldsymbol v)$ is convex
and all the constraints are linear; hence \textbf{PAED} is a convex optimization problem. In the offline setting where the net demand in the entire time horizon, \emph{i.e.}, $e(t)$ for all $t$ in $\mathcal{T}$, is given (by for example accurate prediction), problem \textbf{PAED} can be solved easily using standard solvers. However, the net demand $e(t)$ in microgrid is hard to predict accurately as it inherits substantial uncertainty from renewable generation. This motivates the need of online strategies that do not rely on accurate net demand prediction to operate~\cite{minghua_sigmetrics}.

Denote an online algorithm for  problem \textbf{PAED} by $\mathcal{A}$, we use competitive ratio (\textbf{CR}) as the metric to evaluate its performance. For an online algorithm, its competitive ratio is defined as the maximum ratio between the cost it incurs and the offline optimal cost over all inputs, \textit{i.e.},
$$
    \textbf{CR} (\mathcal{A})\triangleq \max_{\textrm{all inputs}} \frac{\textsf{Cost incurred by } \mathcal{A}}{\textsf{Offline optimal cost}}.
$$
Clearly we have $\textbf{CR} \geq 1$. It is desired to design online algorithms with small competitive ratios, since it guarantees that, for any input, the cost of the online algorithm is close to the offline optimal. The \emph{price of uncertainty} (\textbf{PoU}) for  problem \textbf{PAED}  is defined as the minimum possible competitive ratio across all online algorithms, \emph{i.e.},
\[
    \textbf{PoU} \triangleq \min_{\textrm{all }\mathcal{A}} \textbf{CR} (\mathcal{A}).
\]

\section{Fast-Responding Generator Case}\label{sec:fast}

In this section, we relax the ramping constraints
\eqref{equ:ramping_u}-\eqref{equ:ramping_d} and consider the fast-responding
generator scenario. Most generators employed in microgrids can ramp
up/down very fast. For example, a diesel-based engine can ramp up/down
$40\%$ of its capacity per minute~\cite{vuorinen2007planning}.
Considering the time scale of each slot (\emph{e.g.}, 15 minutes),
those generators can be thought as having no ramping constraints.
That is, $R^{\textsf{u}}=R^{\textsf{d}}=\infty$. We note that even
though we relax the ramping constraints, the relaxed problem, denoted
as \textbf{FS-PAED}, still covers many practical scenarios in the
current microgrids \cite{minghua_sigmetrics}. Moreover, the results
in this section serves a building block for designing online algorithm
for the original problem \textbf{PAED} with ramping constraints, which
we will present in Sec.~\ref{sec:slow}.


In the following, we first focus on a special version of problem $\textbf{FS-PAED}$,
named as $\textbf{FS-PAED}^{k}$, where the net demand only takes
value 0 or 1. We design optimal online algorithms for problem $\textbf{FS-PAED}^{k}$
and then extend the algorithms to solve the general problem $\textbf{FS-PAED}$.


\subsection{Problem $\textbf{FS-PAED}^{k}$ and An Optimal Offline Solution\label{part:structure_off}}

We now consider a special version of problem $\textbf{FS-PAED}$ as
follows:
\begin{eqnarray*}
\textbf{FS-PAED}^{k}: & \min & \textsf{Cost}(\boldsymbol{u}^{k},\boldsymbol{v}^{k})\\
 & \textrm{s.t.} & u^{k}(t)+v^{k}(t)\geq e^{k}(t),\quad t\in\mathcal{T},\\
 & \textrm{var.} & u^{k}(t),v^{k}(t)\in\mathbb{R}^{+},\quad t\in\mathcal{T},
\end{eqnarray*}
where $e^{k}(t)$ only takes value 0 or 1.

We first study the offline setting, where the net demand $e^{k}(t)$,
$t\in\mathcal{T}$, is given ahead of time. We will reveal a useful
structure of the optimal offline solution, which we exploit to design
efficient online algorithms. Note that problem $\textbf{FS-PAED}^{k}$
can be solved by dynamic programming, which however does not seem
to bring significant insights for developing online algorithms. As
such, in what follows, we study the offline optimal solution from
another angle to reveal a useful structure.

Under the setting, the unit cost of local generation is more expensive
than the spot price of the external grid, \emph{i.e.}, $p_{e}(t)<p_{g}$.
However, the expensive local generation can be leveraged to cut off
the peak demand satisfied by the external grid and thus the prohibited
peak charge from the external grid. Thus, the key in solving problem
$\textbf{FS-PAED}$ lies in balancing between the cost of using the
expensive local generation and the peak charge of using the external
grid. It turns out the optimal offline solution, as shown in Lemma~\ref{lemma:opt_k}, is developed by comparing the accumulated cost of using the
local generation and the peak charge and leveraging the special structure
of $e^{k}(t)$.

\begin{lemma}\label{lemma:opt_k} An optimal
offline solution of $\textbf{FS-PAED}^{k}$, denoted by $\left\{ \left(\left(u^{k}(t)\right)^{*},\left(v^{k}(t)\right)^{*}\right)\right\} _{\mathcal{T}}$,
only takes value 0 and 1 and is given by $\left(u^{k}(t)\right)^{*}=e^{k}(k)-\left(v^{k}(t)\right)^{*}$
and
\begin{itemize}
\item if $\sigma>1$, then $\left(v^{k}(t)\right)^{*}=e^{k}(t)$, for all
$t$ in $\mathcal{T}$,
\item otherwise $\left(v^{k}(t)\right)^{*}=0$, for all $t$ in $\mathcal{T}$.
\end{itemize}
Here $\sigma$ is a critical peak-demand threshold defined by
\begin{equation}
\sigma\triangleq\frac{1}{p_{m}}\left[\sum_{t\in\mathcal{T}}\left(p_{g}-p_{e}(t)\right)e^{k}(t)\right].\label{eq:def.critical.peak}
\end{equation}

\end{lemma}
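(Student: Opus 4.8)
The plan is to turn \textbf{FS-PAED}$^k$ into a one-parameter optimization over the peak level whose optimum is attained at a corner point. First I would show that there is an optimal solution with no over-provisioning, i.e. $u^k(t)+v^k(t)=e^k(t)$ for every $t$. Starting from any feasible $(\boldsymbol u^k,\boldsymbol v^k)$, on any slot where $u^k(t)+v^k(t)>e^k(t)$ I would first shave $v^k(t)$ toward $e^k(t)$ and, only if $v^k(t)$ reaches $0$ before equality holds, then shave $u^k(t)$; this never increases $\sum_t p_e(t)v^k(t)$, never increases $p_g\sum_t u^k(t)$, and (because $v^k$ is only decreased) never increases $p_m\max_t v^k(t)$, so the cost does not increase while feasibility is preserved. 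Since $e^k(t)\in\{0,1\}$, this forces $u^k(t)=v^k(t)=0$ on slots with $e^k(t)=0$ and $u^k(t)=1-v^k(t)$ with $v^k(t)\in[0,1]$ on slots with $e^k(t)=1$; in particular $\max_t v^k(t)\in[0,1]$.

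Next I would fix the peak $M\triangleq\max_{t}v^k(t)\in[0,1]$ and rewrite the cost. With $N\triangleq\sum_{t\in\mathcal T}e^k(t)$ and $u^k(t)=e^k(t)-v^k(t)$,
\[
\textsf{Cost}(\boldsymbol u^k,\boldsymbol v^k)=N p_g-\sum_{t\in\mathcal T}\bigl(p_g-p_e(t)\bigr)v^k(t)+p_m M .
\]
Because $p_g\ge p_e(t)$ for all $t$ and $v^k(t)\le M$, for this fixed $M$ the cost is minimized by setting $v^k(t)=M$ on every demand slot (and $v^k(t)=0=e^k(t)$ on the others), which keeps the peak equal to $M$. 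Substituting this choice and using $\sum_{t}\bigl(p_g-p_e(t)\bigr)e^k(t)=p_m\sigma$ from~\eqref{eq:def.critical.peak}, the least cost achievable with peak $M$ equals $N p_g-p_m(\sigma-1)M$.

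Finally I would minimize the affine map $M\mapsto N p_g-p_m(\sigma-1)M$ over $M\in[0,1]$: when $\sigma>1$ the coefficient of $M$ is negative, so the minimizer is the corner $M=1$, giving $v^k(t)^*=e^k(t)$; when $\sigma\le 1$ the coefficient is nonnegative, so the minimizer is the corner $M=0$, giving $v^k(t)^*=0$ (the two corners tie exactly when $\sigma=1$). In either case $u^k(t)^*=e^k(t)-v^k(t)^*$ and the resulting solution takes only the values $0$ and $1$, which is the claim. I expect the only delicate step to be the first reduction, and specifically the ordering ``shave $v^k$ before $u^k$'' that guarantees the peak term cannot increase; once the problem is written as an affine function of $M$, the remainder is a routine corner argument.
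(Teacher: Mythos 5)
Your proof is correct. The paper itself gives no proof of this lemma --- it only appends a remark asserting that ``certain mathematical derivation shows that it suffices to constrain the variables $u^k(t)$ and $v^k(t)$ to be 0 or 1'' and that the solution follows by comparing the accumulated local-generation premium against the peak charge --- so your argument fills a gap rather than duplicating an existing one. Your two-stage reduction (first eliminate over-provisioning by shaving $v^k$ before $u^k$, then observe that for a fixed peak level $M$ the best policy sets $v^k(t)=M$ on every demand slot, collapsing the problem to minimizing the affine map $M\mapsto Np_g-p_m(\sigma-1)M$ over $[0,1]$) is exactly the formalization the paper's remark gestures at, and it cleanly explains both why fractional values are never needed (the optimum of an affine function on $[0,1]$ sits at a corner) and why $\sigma$ is the deciding statistic. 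Two trivial points worth noting: the ``keeps the peak equal to $M$'' step tacitly assumes at least one demand slot (the $N=0$ case is vacuous), and the lemma's $e^k(k)$ is a typo for $e^k(t)$, which you have silently and correctly repaired.
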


\textbf{Remark:} (i) Given that $e^{k}(t)$ is binary, certain mathematical
derivation shows that it suffices to constrain the variables $u^{k}(t)$
and $v^{k}(t)$ to be 0 or 1, and there is no need to consider the
cases where they take fractional values. This greatly simplify the
offline solution. (ii) The optimal solution constructed in Lemma \ref{lemma:opt_k}
is computed given that the critical peak-demand threshold $\sigma$
is determined. Meanwhile, $\sigma$ can only be computed in the offline
setting where the net demand in the entire horizon is given, and it
turns out it is the sufficient statistics of the net demand for characterizing
the ratio between the cost of an online algorithm and the offline
optimal cost.

\subsection{Online Algorithms for Problem $\textbf{FS-PAED}^{k}$}\label{ssec:online.algos.FS-PAED_k}

The challenge for the online algorithm comes
from the fact that it cannot determine the value of critical peak-demand
threshold $\sigma$ ahead of time. This brings out a dilemma in online
decision making: \textit{to suffer deficit of local generator and
bypass the peak charge or to pay for the peak and enjoy cheaper electricity
from the grid}. %
The most \textit{aggressive} strategy acquires electricity from
the grid from the very beginning, while the most \textit{conservative}
strategy uses local generation to satisfy all the net demands in the
entire horizon, to avoid the peak charge.

An important observation in online decision making for problem $\textbf{FS-PAED}^{k}$
is that after purchasing electricity from the grid once, meaning the
peak charge has already been paid (and will not be charged again during
the current billing cycle), the microgrid should continue to use the
cheap electricity from the grid until the end of the billing cycle.
It turns out that the key decision is to determine when to start to
pay the peak-charge premium and buy electricity from the grid.

To pursue online algorithms with minimum competitive ratio, it turns
out that it suffices to focus on online algorithms that switch from
local generation to grid electricity procurement when the accumulated
local generation deficit exceeds $s\cdot p_{m}$, where $s\in [0,\infty)$ is an algorithm-specific parameter. For deterministic algorithms, these are the ones
switching to grid electricity procurement at time $\tau$ that satisfies
the following condition for the first time in the entire horizon:
\[
\sum_{t=1}^{\tau}\left(p_{g}-p_{e}(t)\right)e^{k}(t)\geq s\cdot p_{m}.
\]
The most aggressive strategy discussed above corresponds to $s=0$,
and the most conservative one corresponds to $s=\infty$. Randomized
online algorithms can be then characterized by distributions of $s$.

\subsubsection{An Optimal Deterministic Online Algorithm} \label{ssec:det_online_subproblm_k}

For any deterministic online algorithm with parameter $s$, denoted
by $\mathcal{A}_{s}$, the following proposition characterizes the
ratio between its online cost and the offline optimal cost.

\begin{proposition}\label{proposition:ratio} The ratio between the
cost of a deterministic online algorithm with parameter $s$ and the
offline optimal cost, denoted by $h\left(\mathcal{A}_{s},\sigma\right)$,
is given by: \\
when $\sigma\leq1$,
\begin{equation}
h\left(\mathcal{A}_{s},\sigma\right)=\begin{cases}
1, & \mbox{if }s>\sigma,\\
1+\frac{1-\sigma+s}{\sigma}(1-\beta), & \mbox{otherwise;}
\end{cases}
\end{equation}
when $\sigma>1$,
\begin{equation}
h\left(\mathcal{A}_{s},\sigma\right)=\begin{cases}
1+\frac{(\sigma-1)(1-\beta)}{(\sigma-1)\beta+1}, & \mbox{if }s>\sigma,\\
1+\frac{s(1-\beta)}{(\sigma-1)\beta+1}, & \mbox{otherwise.}
\end{cases}
\end{equation}
The competitive ratio for $\mathcal{A}_{s}$ is then
\begin{equation}
\mathbf{CR}\left(\mathcal{A}_{s}\right)=\max_{\sigma}h\left(\mathcal{A}_{s},\sigma\right).\label{eq:CR_As}
\end{equation}

\end{proposition}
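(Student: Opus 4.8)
The plan is to compute the online cost of $\mathcal{A}_s$ and the offline optimal cost as explicit functions of the sufficient statistic $\sigma$ (and of $\beta$ and $p_m$), and then take their ratio. First I would fix an arbitrary input and let $\tau$ be the switching time of $\mathcal{A}_s$, i.e.\ the first slot where the accumulated deficit $\sum_{t=1}^{\tau}(p_g-p_e(t))e^k(t)$ reaches $s\cdot p_m$; if no such slot exists, $\mathcal{A}_s$ uses local generation throughout. By the structural observation in Sec.~\ref{ssec:online.algos.FS-PAED_k}, after $\tau$ the algorithm buys from the grid for the rest of the horizon, and its cost decomposes into (a) the local-generation cost incurred up to $\tau-1$ for the deficit it carried, (b) the volume charge for grid purchases from $\tau$ onward, and (c) exactly one peak charge $p_m$ (since $e^k$ is binary, the peak drawn is $1$). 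Using Lemma~\ref{lemma:opt_k}, the offline optimum is likewise a clean expression: when $\sigma\le 1$ the optimal policy is fully conservative with cost $p_g\sum_t e^k(t)$, and when $\sigma>1$ the optimal policy is fully aggressive with cost $p_m + \sum_t p_e(t)e^k(t)$.

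Next I would handle the four cases by matching the phase of $\mathcal{A}_s$ against the phase of the optimum. The key algebraic identity is that $\sum_t (p_g - p_e(t)) e^k(t) = \sigma\, p_m$, which lets me rewrite total local-generation cost as $\sigma p_m + \sum_t p_e(t) e^k(t)$ and, crucially, express the ``deficit paid by the online algorithm before switching'' as $s\cdot p_m$ (up to the boundary slot, which contributes negligibly since demands are binary — this is where the Remark after Lemma~\ref{lemma:opt_k} about integrality is used). Concretely: if $s>\sigma$ then $\mathcal{A}_s$ never switches (its total accumulated deficit over the whole horizon is only $\sigma p_m < s p_m$), so it coincides with the conservative policy; comparing against the offline optimum gives $h=1$ when $\sigma\le 1$ and $h = 1 + \frac{(\sigma-1)(1-\beta)}{(\sigma-1)\beta+1}$ when $\sigma>1$, where the denominator is just the normalized offline cost $p_m + \sum_t p_e(t)e^k(t)$ divided by $p_m$, rewritten using $\sum_t p_e(t)e^k(t) \ge \beta p_g \sum_t e^k(t)$-type bounds and the definition of $\beta$. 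If $s\le\sigma$, then $\mathcal{A}_s$ does switch, paying roughly $s p_m$ worth of expensive local generation plus one peak charge plus the cheap grid volume for the tail; dividing by the appropriate offline cost yields the ``$1 + \frac{1-\sigma+s}{\sigma}(1-\beta)$'' and ``$1 + \frac{s(1-\beta)}{(\sigma-1)\beta+1}$'' formulas after simplification. The competitive-ratio claim~\eqref{eq:CR_As} is then immediate from the definition of $\mathbf{CR}$ as the worst case over all inputs, since $\sigma$ ranges over all of $[0,\infty)$ as the input varies and $h$ depends on the input only through $\sigma$.

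I expect the main obstacle to be the bookkeeping at the switching slot $\tau$ and the tight handling of the worst-case spot prices. Two subtleties need care: first, the accumulated deficit jumps across the threshold $s\cdot p_m$ at $\tau$ rather than landing on it exactly, so I must argue (via the binary structure of $e^k$ and a limiting/normalization argument, as flagged in the Remark) that this overshoot does not affect the stated closed forms — essentially one works with the fractional-relaxation value or notes the overshoot is at most one slot's worth and vanishes under the ratio in the relevant regime. Second, to turn the raw cost expressions into ratios matching the proposition, I need to substitute the extremal prices: the online cost is maximized and/or the offline cost minimized by taking $p_e(t)$ as small as possible on the ``tail'' grid purchases (driving $p_e(t)\to p_e^{\min} = \beta p_g$) — this is exactly what makes $\beta$ the right parameter and what produces the $(1-\beta)$ factors. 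Once these two points are pinned down, each of the four cases is a routine simplification.
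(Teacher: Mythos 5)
Your proposal follows essentially the same route as the paper's proof: the same four-way case split on $s$ versus $\sigma$ and $\sigma$ versus $1$, the same cost decomposition at the switching slot (local deficit before, grid volume plus one peak charge after), the same use of the identity $\sum_t(p_g-p_e(t))e^k(t)=\sigma p_m$, and the same substitution of the extremal price $p_e(t)=p_e^{\min}=\beta p_g$ to produce the $(1-\beta)$ factors. If anything, you are slightly more explicit than the paper about why the closed forms hold with equality (the paper's derivation only exhibits the upper bounds, leaving the tightness at worst-case prices implicit), but this is a refinement of the same argument rather than a different one.
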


\begin{proof}
We denote the number of time slots with demand $1$ by $T$, and the number of time slot using the local generator before turning to the grid by $T^s$.

$\vartriangleright$ \textbf{Case 1:} ${\sigma \leq 1}$. The optimal offline solution is always using the local generator and the cost is $\text{Cost}_{\text{off}} = Tp_g$.

\hspace{0.5cm}$\vartriangleright$ \textbf{Case 1.1:} $s> \sigma$. In this case, the online algorithm will not turn to the grid before the input ends. Therefore, the online cost is exactly  the same as the   offline cost, thereby the ratio is $1$.

\hspace{0.5cm}$\vartriangleright$ \textbf{Case 1.2:} $s \leq \sigma$. It turns out that there is a critical time slot $T^s$ that for all  $1 \leq t\leq T^s$, the online algorithm uses the local generator and for  time slots $T^s < t \leq  T$, it turns to the grid, thereby we have ${\text{Cost}_{\text{on}} = T^sp_g+\sum_{t = T^s+1}^{t=T}p_e(t)+p_m}$.
Hence, we get the following ratio:
\begin{align*}
h(s,\sigma) &= \frac{T^sp_g+\sum_{t = T^s+1}^{t=T}p_e(t)+p_m}{Tp_g}\\
&=\frac{Tp_g-(T-T^s)p_g+\sum_{t = T^s+1}^{t=T}p_e(t)+p_m}{Tp_g}\\
&=1+\frac{\sum_{t=1}^{T^s}(p_g-p_e(t))-\sum_{t=1}^{t=T}(p_g-p_e(t))+p_m}{Tp_g}\\
&\leq1+\frac{s p_m-\sigma p_m+p_m}{Tp_g}\\
& = 1+(1-\sigma+s)\frac{p_m}{T p_g}\\
&\leq 1+\frac{1-\sigma+s}{\sigma}\frac{p_g-p_e^{\min}}{p_g}\\
&=1+\frac{1-\sigma+s}{\sigma}(1-\beta)
\end{align*}
The last inequality is due to the fact that ${\sigma p_m = \sum(p_g-p_e(t))\leq T(p_g-p_e^{\min})}$.

$\vartriangleright$ \textbf{Case 2:} $\sigma > 1$. The optimal offline solution is always acquiring the electricity from the grid and the cost is ${\text{Cost}_{\text{off}}=\sum_{t=1}^{T} p_e(t)+p_m}$.

\hspace{0.5cm}$\vartriangleright$ \textbf{Case 2.1:} $s> \sigma$, In this case, the online algorithm always uses the local generator and thus the online cost is ${\text{Cost}_{\text{on}} = Tp_g}$. Hence, the ratio is as follows:
\begin{align*}
h(s,\sigma) &= \frac{Tp_g}{\sum_{t=1}^{T} p_e(t)+p_m}\\
& = 1+\frac{\sum_{t=1}^{T}(p_g-p_e(t))-p_m}{\sum_{t=1}^{T} p_e(t)+p_m}\\
& \leq 1+\frac{(\sigma-1)p_m}{Tp_e^{\min}+p_m}\\
& = 1+\frac{(z-1)}{p_e^{\min}\frac{T}{{p_m}}+1}\\
& \leq 1+\frac{(\sigma-1)}{p_e^{\min}\frac{\sigma}{p_g-p_e^{\min}}+1}\\
& = 1+\frac{(\sigma-1)(p_g-p_e^{\min})}{(\sigma-1)p_e^{\min}+p_g}\\
& = 1+\frac{(\sigma-1)(1-\beta)}{(\sigma-1)\beta+1}
\end{align*}
where the last inequality is true since  $\sigma p_m \leq T(p_g-p_e^{\min})$.

\hspace{0.5cm}$\vartriangleright$ \textbf{Case 2.2:} $s\leq \sigma$. Like case 2.1 here we have  $T\geq T^s$. Therefore, the online algorithm  uses the local generator for the first $T^s$ time slots and turns to the grid afterwards. In this case, the online cost is ${\text{Cost}_{\text{on}} = T^sp_g+\sum_{t = T^s+1}^{t=T}p_e(t)+p_m}$, and the ratio is
\begin{align*}
h(s,\sigma) &= \frac{T^sp_g+\sum_{t = T^s+1}^{t=T}p_e(t)+p_m}{\sum_{t=1}^{T} p_e(t)+p_m}\\
& =1+ \frac{\sum_{t=1}^{T^s}(p_g-p_e(t))}{\sum_{t=1}^{T} p_e(t)+p_m}\\
&\leq 1+\frac{sp_m}{Tp_e^{\min}+p_m}\\
& \leq 1+\frac{s}{p_e^{\min}\frac{\sigma}{p_g-p_e^{\min}}+1}\\
& = 1+\frac{s(p_g-p_e^{\min})}{(\sigma-1)p_e^{\min}+p_g}\\
& = 1+\frac{s(1-\beta)}{(\sigma-1)\beta+1}
\end{align*}

The proof is completed.
\end{proof}
Based on the above proposition, we can design the best deterministic
online algorithm by solving the following min-max optimization problem
\begin{equation}
\min_{s}\max_{\sigma}h\left(\mathcal{A}_{s},\sigma\right).\label{equ:best_deter}
\end{equation}
The problem is non-convex and thus challenging on the first sight.
However, given a deterministic online algorithm $\mathcal{A}_{s}$,
it turns out the worst cost ratio is obtained when $\sigma=s$, in
which case the online algorithm pays for the peak-charge premium but
there is no net demand to serve anymore. Thus we have
\begin{align*}
\max_{\sigma}h\left(\mathcal{A}_{s},\sigma\right) & =h\left(\mathcal{A}_{s},s\right)=\begin{cases}
1+\frac{1}{s}(1-\beta), & \mbox{if }s\leq1,\\
1+\frac{s(1-\beta)}{(s-1)\beta+1}, & \mbox{otherwise}.
\end{cases}
\end{align*}
Leveraging this observation, the problem in (\ref{equ:best_deter})
can be solved easily by studying the extreme points of the two functions
of $s$, and the optimal value is obtained when $s=1$. To visualize
how the competitive ratio varies as $s$ changes, we plot the competitive
ratio for different values of $s$ in Fig.~\ref{fig:CR_diffs} for
the case where $\beta=0.3$.
\begin{figure}
\centering
\includegraphics[width=0.8\columnwidth]{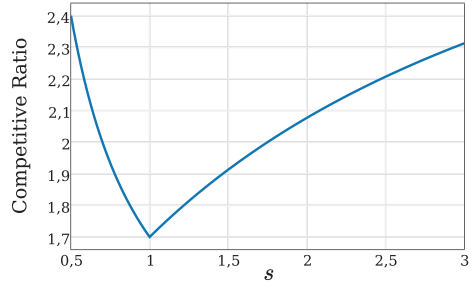}\\
 \protect\caption{Competitive ratio of $\mathcal{A}_{s}$ as a function of $s$, with
$\beta=0.3$.}
\label{fig:CR_diffs}
\end{figure}

We obtain the optimal deterministic online algorithm by setting $s=1$,
named as Break-Even Economic Dispatching for problem $\textbf{FS-PAED}^{k}$
(\textbf{BED-$k$}). The algorithm switches from local generation
to grid electricity procurement when the accumulated local generation
deficit seen so far just equals the peak charge, thus the name ``break-even
dispatching''. We summarize the algorithm \textbf{BED-$k$} into
Algorithm~\ref{alg:online_k}, and characterize its competitive
ratio in the following theorem.

\begin{theorem}\label{lemma:CR_k_deter} The competitive ratio of
\textbf{BED-$k$} is given by
\[
\mathbf{CR}\left(\mathbf{BED-}k\right)=2-\beta.
\]
This also gives the price
of uncertainty suffered by all deterministic online algorithms, \emph{i.e.},
\[
\mathbf{PoU}_{\mbox{det}}=\min_{\mbox{all deterministic }\mathcal{A}_{s}}\mathbf{CR}\left(\mathcal{A}_{s}\right)=2-\beta.
\]
\end{theorem}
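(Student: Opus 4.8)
The plan is to prove the two equalities separately: first the exact evaluation $\mathbf{CR}(\textbf{BED-}$$k$$) = 2-\beta$, and then the matching optimization statement $\mathbf{PoU}_{\mathrm{det}} = 2-\beta$, which packages a one-parameter minimization together with a lower bound over all deterministic algorithms.

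For the first equality I would start from the closed form for the worst-case cost ratio of a parametrized algorithm $\mathcal{A}_s$ that was derived just before the theorem statement,
\[
\max_{\sigma} h\left(\mathcal{A}_s,\sigma\right) = \begin{cases} 1 + \frac{1-\beta}{s}, & s \le 1,\\[2pt] 1 + \frac{s(1-\beta)}{(s-1)\beta+1}, & s > 1,\end{cases}
\]
which already folds in Proposition~\ref{proposition:ratio} and the observation that the adversary's best choice of net demand corresponds to $\sigma = s$ (the algorithm pays the peak premium and then the demand stops). Since $\textbf{BED-}$$k$ is exactly $\mathcal{A}_1$, substituting $s=1$ into either branch yields $2-\beta$; I would also remark that this ratio is actually attained (e.g.\ at the ``break-even'' instance $\sigma = 1$), so $\mathbf{CR}(\textbf{BED-}$$k$$)=2-\beta$ exactly rather than as an upper estimate.

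For $\mathbf{PoU}_{\mathrm{det}}$ I would minimize $g(s) := \max_\sigma h(\mathcal{A}_s,\sigma)$ over $s\in[0,\infty)$. On $(0,1]$ the branch $1+(1-\beta)/s$ is strictly decreasing; on $[1,\infty)$ the branch $1 + \frac{s(1-\beta)}{(s-1)\beta+1}$ is strictly increasing, which follows from $\frac{d}{ds}\frac{s}{\beta s + 1-\beta} = \frac{1-\beta}{(\beta s + 1-\beta)^2}>0$ using $\beta\le 1$; and $g(s)\to\infty$ as $s\to 0^{+}$. Hence $g$ attains its minimum at $s=1$ with value $2-\beta$, so within the threshold family $\{\mathcal{A}_s\}$ no algorithm beats $\textbf{BED-}$$k$. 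To lift this to all deterministic online algorithms I would invoke the structural reduction set up in Section~\ref{ssec:online.algos.FS-PAED_k}: after the first grid purchase a deterministic algorithm should keep buying from the grid (the ``important observation''), so its only real degree of freedom is \emph{when} to switch; an adversary who pads the horizon with extra demand-$1$ slots and drives all spot prices to $p_e^{\min}$ makes the worst-case ratio depend only on the accumulated local-generation deficit at the switching instant, which is precisely a rule of the form ``switch when the deficit reaches $s\cdot p_m$''. Combining this with the minimization gives $\mathbf{PoU}_{\mathrm{det}} = \min_s g(s) = 2-\beta$.

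The main obstacle is this last reduction — certifying that an arbitrary deterministic algorithm is dominated by some $\mathcal{A}_s$. Two points need care: (i) ruling out that delaying, or splitting, the first grid purchase can help, i.e.\ that ``buy enough and keep buying'' is without loss, which rests on $p_g \ge p_e(t)$ and on the peak charge being sunk once incurred; and (ii) arguing that the accumulated weighted deficit $\sum_t (p_g - p_e(t))e^k(t)$, rather than the full history, is a sufficient statistic for the adversary's continuation. Granting this reduction, the remaining work — evaluating $g$ at $s=1$ and the one-variable monotonicity argument — is elementary, so I expect the write-up to lean on the discussion already present in Section~\ref{ssec:online.algos.FS-PAED_k} for (i)--(ii) and to devote its effort to the clean calculus.
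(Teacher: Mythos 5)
Your proposal is correct and follows essentially the same route as the paper, which in fact gives no separate proof of this theorem: it relies entirely on the preceding discussion (worst case at $\sigma=s$, the two-branch closed form for $\max_\sigma h(\mathcal{A}_s,\sigma)$, and minimization at $s=1$), exactly as you do. Your added monotonicity computation on each branch and your explicit flagging of the reduction ``every deterministic algorithm is dominated by some threshold algorithm $\mathcal{A}_s$'' --- a step the paper also only asserts (``it turns out that it suffices to focus on\ldots'') --- make your write-up, if anything, more complete than the original.
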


\begin{algorithm}[!ht]
\protect\caption{BED-$k$: Optimal deterministic online algorithm for $\textbf{FS-PAED}^{k}$}

\begin{algorithmic}[1]
\REQUIRE $p_{m}$,$p_{g}$,$p_{e}(t)$,$e^{k}(t)$
\ENSURE $u^{k}(t)$,$v^{k}(t)$
\STATE $\zeta=0$, $\tau=1$
\WHILE{$\tau\in\mathcal{T}$}
\IF{$\exists\iota<\tau$ such that $v^{k}(\iota)=1$}
\STATE $v^{k}(\tau)=e^{k}(\tau)$,
$u^{k}(\tau)=0$
\ELSE
\STATE $\zeta=\zeta+(p_{g}-p_{e}(\tau))e^{k}(\tau)$
\IF{$\zeta<p_{m}$}
\STATE $u^{k}(\tau)=e^{k}(\tau)$, $v^{k}(\tau)=0$
\ELSE
\STATE $v^{k}(\tau)=e^{k}(\tau)$, $u^{k}(\tau)=0$ \ENDIF
\ENDIF
\STATE $\tau=\tau+1$
\ENDWHILE
\end{algorithmic} \label{alg:online_k}
\end{algorithm}

We remark that the optimal deterministic algorithm is easy to implement
and achieves the minimum possible competitive ratio for problem $\textbf{FS-PAED}^{k}$.
Next, we proceed to design optimal randomized algorithm for the problem.

\subsubsection{An Optimal Randomized Online Algorithm}

Recall that for the purpose of designing randomized online algorithms
with the minimum competitive ratio for problem $\textbf{FS-PAED}^{k}$,
it suffices to consider algorithm $\mathcal{A}_{f}$ where $f$ represents
the probability distribution by which we generate the algorithm-specific
threshold $s$. Based on the analysis for deterministic online algorithms
in Sec. \ref{ssec:det_online_subproblm_k}, we can find the competitive
ratio of $\mathcal{A}_{f}$ by solving the following optimization
problem:
\begin{equation}
\mathbf{CR}\left(\mathcal{A}_{f}\right)=\max_{\sigma}\mathbf{E}_{f}\left[h\left(\mathcal{A}_{f},\sigma\right)\right]=\max_{\sigma}\int_{s}h\left(\mathcal{A}_{s},\sigma\right)f(s)ds.\label{eq:randomized.algo.CR}
\end{equation}

In the following, we first design a randomized online algorithm by
specifying a particular probability distribution and compute its competitive
ratio. We then leverage \emph{Yao's Principle} \cite{YaoPrinciple}
to obtain a lower bound of the competitive ratio of any randomized
algorithm. We will see the competitive ratio of our proposed online
algorithm matches the lower bound, establishing its optimality. The
result thus also characterizes the price of uncertainty suffered by
all randomized online algorithms.

We propose a randomized online algorithm by choosing the distribution
for $s$ as
\begin{equation}
f^{*}(s)=\begin{cases}
\frac{e^{s}}{e-1+\beta}, & \mbox{when }s\in[0,1];\\
\frac{\beta}{e-1+\beta}\delta(0), & \mbox{when }s=\infty;\\
0, & \mbox{otherwise}.
\end{cases}\label{equ:random_s}
\end{equation}
We summarize the resulting randomized online algorithm in to Algorithm
\ref{alg:random_online_k}, named as Randomized Economic Dispatching for problem $\textbf{FS-PAED}^{k}$
(\textbf{RED-$k$}). Its competitive ratio is characterized in the
following theorem.

\begin{theorem}\label{proposition:CR} With the distribution given
by $f^{*}(s)$ in \eqref{equ:random_s}, the competitive ratio of
\textbf{RED-$k$} is given by
\[
\mathbf{CR}\left(\mathbf{RED-}k\right)=\frac{e}{e-1+\beta}.
\]

\end{theorem}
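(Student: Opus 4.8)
\section*{Proof proposal}

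The plan is to evaluate $\mathbf{E}_{f^{*}}\!\left[h(\mathcal{A}_{s},\sigma)\right]=\int_{s}h(\mathcal{A}_{s},\sigma)\,f^{*}(s)\,ds$ for an arbitrary fixed $\sigma$ using the closed form of $h(\mathcal{A}_{s},\sigma)$ from Proposition~\ref{proposition:ratio}, and then take $\max_{\sigma}$ as in~\eqref{eq:randomized.algo.CR}. Writing $Z\triangleq e-1+\beta$ for the normalizer, one first checks $\int_{0}^{1}e^{s}/Z\,ds+\beta/Z=(e-1+\beta)/Z=1$, so $f^{*}$ in~\eqref{equ:random_s} is a valid distribution consisting of a continuous part on $[0,1]$ and an atom of mass $\beta/Z$ at $s=\infty$; note that the atom always falls in the ``$s>\sigma$'' branch of $h$.

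I would split on $\sigma\le1$ versus $\sigma>1$. For $\sigma\le1$: on the support of $f^{*}$ the event $\{s>\sigma\}$ (which includes the atom) gives $h=1$, and for $s\in[0,\sigma]$ we have $h(\mathcal{A}_{s},\sigma)=1+\tfrac{1-\sigma+s}{\sigma}(1-\beta)$. Collecting the ``$1$'' contributions gives $\tfrac{1}{Z}\!\left(\int_{0}^{1}e^{s}\,ds+\beta\right)=1$, and the remaining term is $\tfrac{1-\beta}{\sigma Z}\int_{0}^{\sigma}(1-\sigma+s)e^{s}\,ds$. The crux is the elementary identity $\int_{0}^{\sigma}(1-\sigma+s)e^{s}\,ds=\sigma$ (the $e^{\sigma}$ terms cancel), so this term equals $\tfrac{1-\beta}{Z}$ and $\mathbf{E}_{f^{*}}[h]=1+\tfrac{1-\beta}{Z}=\tfrac{Z+1-\beta}{Z}=\tfrac{e}{e-1+\beta}$, independent of $\sigma$.

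For $\sigma>1$ the whole continuous part lies in $\{s\le\sigma\}$, where $h(\mathcal{A}_{s},\sigma)=1+\tfrac{s(1-\beta)}{(\sigma-1)\beta+1}$, while the atom at $s=\infty$ contributes $1+\tfrac{(\sigma-1)(1-\beta)}{(\sigma-1)\beta+1}$. Using $\int_{0}^{1}e^{s}\,ds=e-1$ and $\int_{0}^{1}se^{s}\,ds=1$, one gets $\mathbf{E}_{f^{*}}[h]=\tfrac{1}{Z}\!\left[(e-1)+\beta+\tfrac{(1-\beta)(1+\beta(\sigma-1))}{(\sigma-1)\beta+1}\right]$; the last fraction again collapses to $1-\beta$, giving $\mathbf{E}_{f^{*}}[h]=\tfrac{e}{e-1+\beta}$ once more. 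Hence $\mathbf{E}_{f^{*}}[h(\mathcal{A}_{s},\sigma)]$ is the constant $e/(e-1+\beta)$ for every $\sigma>0$, so $\mathbf{CR}(\mathbf{RED-}k)=\max_{\sigma}\mathbf{E}_{f^{*}}[h]=e/(e-1+\beta)$.

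The argument is essentially careful bookkeeping rather than a deep step; the only things to watch are (i) splitting the expectation at the threshold $s=\sigma$ correctly and remembering that the point mass at $s=\infty$ always sits in the ``aggressive''/$s>\sigma$ branch, and (ii) the fact that $f^{*}$ is reverse-engineered precisely so that the two integrals $\int_{0}^{\sigma}(1-\sigma+s)e^{s}\,ds$ and $\int_{0}^{1}se^{s}\,ds$ produce the cancellations that flatten $\mathbf{E}_{f^{*}}[h]$ in $\sigma$; this also makes transparent how one would \emph{derive} $f^{*}$ by imposing $\tfrac{d}{d\sigma}\mathbf{E}_{f}[h]=0$. The degenerate input $\sigma=0$ (no net demand, zero cost, ratio $1$) can be noted separately so the supremum over $\sigma$ is clearly attained at the stated value.
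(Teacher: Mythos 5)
Your proposal is correct and follows essentially the same route as the paper's proof: evaluate $\mathbf{E}_{f^{*}}[h(\mathcal{A}_{s},\sigma)]$ separately for $\sigma\le 1$ and $\sigma>1$, use the cancellations $\int_{0}^{\sigma}(1-\sigma+s)e^{s}\,ds=\sigma$ and $\int_{0}^{1}se^{s}\,ds=1$ to show the expectation is the constant $e/(e-1+\beta)$, and then take the maximum over $\sigma$. Your bookkeeping of the atom at $s=\infty$ and the branch boundaries is in fact cleaner than the paper's own writeup, which contains some garbled intermediate expressions in the $\sigma>1$ case.
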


\begin{proof}
When $\sigma\leq 1$,
\begin{align*}
\int_sh(s,\sigma)f^*(s)ds & = \int_{0}^{\sigma} (1+\frac{1-\sigma+s}{\sigma}(1-\beta))\frac{e^s}{e-1+\beta} ds \\
&+ \int_{\sigma}^{1}\frac{e^s}{e-1+\beta} ds + \frac{e}{e-1+\beta}\\
& = 1+\int_0^{\sigma}\frac{1-\sigma+s}{\sigma}(1-\beta)\frac{e^s}{e-1+\beta}ds\\
& = 1+\frac{1-\beta}{e-1+\beta}\\
& = \frac{e}{e-1+\beta}
\end{align*}

When $\sigma>1$,

\begin{align*}
\int_sh(s,\sigma)f^*(s)ds & = \int_{0}^1 (1+\frac{1-\sigma+s}{\sigma}(1-\beta))\frac{e^s}{e-1+\beta} ds\\
 & + \int_1^{\sigma}\frac{e^s}{e-1+\beta} ds \\
 & +\int_{\sigma}^{+\infty}(1+\frac{(\sigma-1)(1-\beta)}{(\sigma-1)\beta+1})\frac{\beta}{e-1+\beta}\\
& = 1+\frac{1}{e-1+\beta}\frac{1-\beta}{(\sigma-1)\beta+1}(1+(\sigma-1)\beta)\\
& = 1+\frac{1-\beta}{e-1+\beta}\\
& = \frac{e}{e-1+\beta}
\end{align*}

Then $$\max_{\sigma}\int_sh(s,\sigma)f^*(s)ds = \frac{e}{e-1+\beta}.$$
The proof is completed.
\end{proof}

\begin{algorithm}[!ht]
\protect\caption{RED-$k$: Optimal randomized online algorithm for $\textbf{FS-PAED}^{k}$}

\begin{algorithmic}[1]
\REQUIRE $p_{m}$,$p_{g}$,$p_{e}(t)$,$e^{k}(t)$
\ENSURE $u^{k}(t)$,$v^{k}(t)$
\STATE generate $s$ according to the probability distribution specified in~(\ref{equ:random_s})
\STATE $\zeta=0$, $\tau=1$
\WHILE{$\tau\in\mathcal{T}$}
\IF{$\exists\iota<\tau$ such that $v^{k}(\iota)=1$}
\STATE $v^{k}(\tau)=e^{k}(\tau)$,
$u^{k}(\tau)=0$
\ELSE
\STATE $\zeta=\zeta+(p_{g}-p_{e}(\tau))e^{k}(\tau)$
\IF{$\zeta<s\cdot p_{m}$}
\STATE $u^{k}(\tau)=e^{k}(\tau)$, $v^{k}(\tau)=0$
\ELSE
\STATE $v^{k}(\tau)=e^{k}(\tau)$, $u^{k}(\tau)=0$ \ENDIF
\ENDIF
\STATE $\tau=\tau+1$
\ENDWHILE
\end{algorithmic} \label{alg:random_online_k}
\end{algorithm}

Now we leverage \textit{\emph{Yao's Principle}} \cite{YaoPrinciple}
to obtain a lower bound for the competitive ratio of any randomized
online algorithm. The idea is to choose a probability distribution
for $\sigma$, denoted by $g(\sigma)$, and compute the competitive
ratio of the best deterministic online algorithm for this input. Yao's
Principle says that the computed ratio is a lower bound for any randomized
online algorithm. The particular distribution we use is given by
\begin{equation}
g^{*}(\sigma)=\begin{cases}
\frac{e}{e-1+\beta}\sigma e^{-\sigma}, & \mbox{when }\sigma\in[0,1],\\
\frac{e}{e-1+\beta}[(\sigma-1)\beta+1]e^{-\sigma}, & \mbox{otherwise.}
\end{cases}\label{equ:random_z}
\end{equation}
The lower bound is characterized in the following lemma.

\begin{theorem}\label{proposition:lower_bound} For any randomized
online algorithm $\mathcal{A}_{f}$ for problem $\textbf{FS-PAED}^{k}$,
we have
\[
\mathbf{CR}\left(\mathcal{A}_{f}\right)\geq\frac{e}{e-1+\beta}.
\]
The competitive ratio of algorithm \textbf{RED-$k$} achieves this
lower bound and thus is optimal. Consequently, the price of uncertainty
suffered by all randomized online algorithms is given by
\[
\mathbf{PoU}_{\mbox{ran}}=\min_{\mbox{all randomized }\mathcal{A}_{f}}\mathbf{CR}\left(\mathcal{A}_{f}\right)=\frac{e}{e-1+\beta}.
\]
\end{theorem}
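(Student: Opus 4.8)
The plan is to invoke Yao's Principle on the class of threshold algorithms $\{\mathcal{A}_s : s\in[0,\infty]\}$, which by the reduction established in Sec.~\ref{ssec:online.algos.FS-PAED_k} is, without loss of generality, the class of all deterministic online algorithms for $\textbf{FS-PAED}^k$, so that any randomized $\mathcal{A}_f$ is a mixture over it. Concretely, I would fix the adversarial input distribution to be one whose induced law on the sufficient statistic $\sigma$ of \eqref{eq:def.critical.peak} equals $g^*$ from \eqref{equ:random_z}, and show that against this random input every deterministic $\mathcal{A}_s$ incurs expected cost ratio $\mathbf{E}_{g^*}\!\left[h(\mathcal{A}_s,\sigma)\right]\ge e/(e-1+\beta)$; Yao's Principle then gives $\mathbf{CR}(\mathcal{A}_f)\ge e/(e-1+\beta)$ for every randomized $\mathcal{A}_f$. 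Optimality of \textbf{RED-$k$} and the value of $\mathbf{PoU}_{\mathrm{ran}}$ then follow from Theorem~\ref{proposition:CR}.

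The first (routine) step is to verify that $g^*$ is a bona fide probability density on $[0,\infty)$, i.e. $\int_0^\infty g^*(\sigma)\,d\sigma=1$; this reduces to the elementary moments $\int_0^1\sigma e^{-\sigma}d\sigma=1-2/e$ and $\int_1^\infty[(\sigma-1)\beta+1]e^{-\sigma}d\sigma=(1+\beta)/e$, which sum to $(e-1+\beta)/e$. The heart of the argument is the second step: compute $\mathbf{E}_{g^*}[h(\mathcal{A}_s,\sigma)]=\int_0^\infty h(\mathcal{A}_s,\sigma)g^*(\sigma)\,d\sigma$ using the piecewise closed form of $h$ from Proposition~\ref{proposition:ratio}. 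Because $h(\mathcal{A}_s,\cdot)$ branches at $\sigma=s$ and at $\sigma=1$ while $g^*$ branches at $\sigma=1$, I would split into the cases $s\le 1$, $1<s<\infty$, and $s=\infty$, and within each split the integral at the breakpoints among $\{s,1\}$. In every piece the integrand is a low-degree polynomial in $\sigma$ times $e^{-\sigma}$, so each integral is elementary; I expect the cancellations (the relevant pieces of $h\cdot g^*$ collapsing to multiples of $\sigma e^{-\sigma}$ or $[(\sigma-1)\beta+1]e^{-\sigma}$) to make $\mathbf{E}_{g^*}[h(\mathcal{A}_s,\sigma)]$ come out \emph{exactly} equal to $e/(e-1+\beta)$, independently of $s$. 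This is the precise dual of the computation in the proof of Theorem~\ref{proposition:CR}: there, $f^*$ was tuned so that $\mathbf{E}_{f^*}[h(\mathcal{A}_s,\sigma)]$ is constant in $\sigma$; here $g^*$ is tuned so that $\mathbf{E}_{g^*}[h(\mathcal{A}_s,\sigma)]$ is constant in $s$.

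Given these two steps, the conclusion is immediate: $\mathbf{CR}(\mathcal{A}_f)\ge\inf_{s}\mathbf{E}_{g^*}[h(\mathcal{A}_s,\sigma)]=e/(e-1+\beta)$ by Yao's Principle, and combining with the matching upper bound of Theorem~\ref{proposition:CR} yields optimality of \textbf{RED-$k$} and $\mathbf{PoU}_{\mathrm{ran}}=e/(e-1+\beta)$.

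I expect the main obstacle to be the bookkeeping in the case split for $\mathbf{E}_{g^*}[h(\mathcal{A}_s,\sigma)]$ — getting the interaction of the two breakpoints $\sigma=s$ and $\sigma=1$ right and treating the atom at $s=\infty$ correctly — together with one modeling subtlety in applying Yao's Principle: one must check that the chosen law $g^*$ on $\sigma$ is actually realized by some distribution over legitimate binary-demand instances (any prescribed nonnegative value of $\sigma$ in \eqref{eq:def.critical.peak} is attainable by suitably choosing the number of demand slots and the spot prices), so that the ``expected ratio of the best deterministic algorithm'' lower bound genuinely transfers to the competitive ratio of randomized algorithms in the ratio form in which it is stated.
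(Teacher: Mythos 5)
Your proposal is correct and follows essentially the same route as the paper: the paper's proof likewise applies Yao's Principle with the adversarial law $g^*$ on $\sigma$, splits into the cases $s\le 1$ and $s>1$, and verifies by elementary integration that $\int_\sigma h(\mathcal{A}_s,\sigma)g^*(\sigma)\,d\sigma = e/(e-1+\beta)$ independently of $s$, whence the lower bound and the optimality of \textbf{RED-$k$} follow from Theorem~\ref{proposition:CR}. Your added checks (normalization of $g^*$ and realizability of the law on $\sigma$ by legitimate binary-demand instances) are sensible refinements the paper leaves implicit.
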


\begin{proof}
When $s\leq 1$,
\begin{align*}
\int_{\sigma}h(s,\sigma)g^*(\sigma)d\sigma & = \int_{0}^s g^*(\sigma)d\sigma + \int_{s}^{1}(1+\frac{(1-\sigma)s}{z})g^*(\sigma) d\sigma\\
 & + \int_{1}^{+\inf}(1+\frac{s(1-\beta)}{(\sigma-1)\beta+1})g^*(\sigma) d\sigma \\
& = 1+ \frac{e(1-\beta)}{e-1+\beta}\cdot \\
& \left[\int_s^1(1-\sigma+s)e^{-\sigma}dz + \int_{1}^{+\inf}se^{-\sigma}d\sigma\right]\\
& = 1+\frac{e(1-\beta)}{e-1+\beta}e^{-1}\\
& = \frac{e}{e-1+\beta}
\end{align*}

When $s > 1$,
\begin{align*}
\int_{\sigma}h(s,\sigma)g^*(\sigma)d\sigma & = \int_{0}^1 g^*(\sigma) d\sigma \\
& + \int_{1}^{s} (1+\frac{(\sigma-1)(1-\beta)}{(\sigma-1)\beta +1})g^*(\sigma)d\sigma \\
& + \int_{s}^{+\inf} (1+\frac{s(1-\beta)}{(\sigma-1)\beta+1})g^*(\sigma)d\sigma\\
& = 1+\frac{e(1-\beta)}{e-1+\beta}\cdot\\
& \left[\int_1^s (\sigma-1)e^{-\sigma} d\sigma+\int_s^{+\inf}se^{-\sigma} d\sigma\right]\\
& = 1+\frac{e(1-\beta)}{e-1+\beta}e^{-1}\\
& = \frac{e}{e-1+\beta}
\end{align*}

Then $$\min_{s}\left[\int_{\sigma}h(s,\sigma)g^*(\sigma)d\sigma\right] = \frac{e}{e-1+\beta}.$$
The proof is completed.
\end{proof}
\textbf{Remark: }(i) In the deterministic online algorithm, setting
$s=1$ means that the microgrid will start to buy electricity from
the grid until the break-even condition is met. Similar to the ski
rental problem \cite{karlin1988competitive}, the break-even point
turns out to be the best balance between being aggressive and conservative.
(ii) The vigilant readers may notice that $f^{*}(s)$ is the same
distribution that was adopted in solving the classic Bahncard problem
\cite{bahncard}, which is indeed similar to problem $\textbf{FS-PAED}^{k}$
we study in this section. The basic version of $\textbf{FS-PAED}^{k}$,
however, is different from Bahncard problem in the sense that the
\textit{discounted price} ($p_{e}(t)$ in this paper) is time varying.
(iii) Different from the neat tricks used in \cite{bahncard} to prove
the optimality of the proposed randomized online algorithm \cite{bahncard},
we leverage \textit{Yao's Principle} to prove the optimality of our
proposed algorithm \textbf{RED-$k$} for problem $\textbf{FS-PAED}^{k}$.
Exploiting the similarity of the two problems, our approach can also
be applied to establish optimality of the proposed algorithm for the
Bahncard problem in \cite{bahncard}.

\subsection{From Problem $\textbf{FS-PAED}^{k}$ to Problem $\textbf{FS-PAED}$}

\label{subsec:fromto} In this section, we design online deterministic
and randomized algorithms for $\textbf{FS-PAED}$ based on those of
$\textbf{FS-PAED}^{k}$.

\subsubsection{Net Demand Layering}

For each time slot $t$, we divide the demand $e(t)$ into multiple
layers such that the demand of each layer is either $1$ or $0$,
as shown in Fig.~\ref{fig:layering}. Recall that $e(t)$ is assumed
to take non-negative integer values. We denote the sub-problem of
satisfying the demand of each layer as $\textbf{FS-PAED}^{k},k=1,2,...$,
and we can apply the online algorithms \textbf{BED}-$k$ and \textbf{RED}-$k$
for each sub-problem.

\begin{figure}
\centering
\includegraphics[width=0.95\columnwidth]{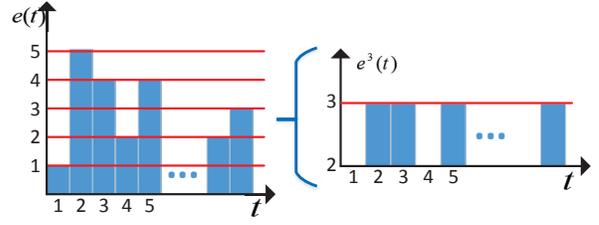}\\
 \protect\caption{An example of decomposing the demand into multiple layers and a microscopic
view of layer 3}

\label{fig:layering}
\end{figure}

\subsubsection{Optimal Online Algorithms for $\textbf{FS-PAED}$ }

After layering, a bunch of sub-problems $\textbf{FS-PAED}^{k}$ are
obtained. However, unlike $\textbf{FS-PAED}^{k}$, the net demand
of $\textbf{FS-PAED}$ in some time slots can exceed the capacity
of local generation, which makes it infeasible to ignore the whole
picture when conquering each layer independently. For example, suppose
the generation capacity is 4 for the case shown in Fig.~\ref{fig:layering}.
Even though the break even points are not reached for all the layers
in time slot 2, it is infeasible to set $u^{k}(2)=1$ for all the
layers (A capacity of 5 is needed to do so). Thus by taking into account
the capacity constraint, we need to determine for which layers the
demand should be satisfied by the grid while still keeping the algorithm
competitive.

An obvious but critical observation is that the demands in the lower
layers are denser than those in the upper layers. In addition, after
being charged for the peak, we expect more demands to come to enjoy
the cheap grid electricity. Consequently, it is always more economic
to use the grid electricity to satisfy the denser demands, \emph{i.e.},
the lower layers. In other words, in the proper algorithm design,
the layers below $(e(t)-C)^{+}$ should always be satisfied by the
grid. Meanwhile, for the layers above $(e(t)-C)^{+}$, if the demand
is already satisfied by the grid, the online algorithm continues to
acquire the electricity from the grid; otherwise, Algorithm \textbf{BED}-$k$
or\textbf{ RED}-$k$ is applied with the same value $s$ for all layers
to obtain the sub-solutions. The solution is finally obtained by combining
the sub-solutions. We summarize the resulting deterministic an randomized
online algorithms, named as \textbf{BED} and \textbf{RED}, in Algorithm~\ref{alg:deterministic.online}
and \ref{alg:randomized.online}, respectively.

\begin{algorithm}[!h]
\protect\caption{\textbf{BED}: Optimal deterministic online algorithm for $\textbf{FS-PAED}$}

\begin{algorithmic}[1] \REQUIRE $C$,$p_{m}$,$p_{g}$,$p_{e}(t)$,$e(t)$
\ENSURE $u(t)$,$v(t)$ \WHILE{$\tau\in\mathcal{T}$} \STATE
A threshold: $\varsigma=(e(\tau)-C)^{+}$. \STATE For the layers
below $\varsigma$, $v^{k}(\tau)=1$, $u^{k}(\tau)=0$ \STATE For
the layers above $\varsigma$, run \textbf{BED-$k$} to obtain $u^{k}(\tau)$
and $v^{k}(\tau)$. \STATE $u(\tau)=\sum_{k}u^{k}(\tau)$, $v(\tau)=\sum_{k}v^{k}(\tau)$
\STATE $\tau=\tau+1$ \ENDWHILE \end{algorithmic} \label{alg:deterministic.online}
\end{algorithm}

\begin{algorithm}[!h]
\protect\caption{\textbf{RED}: Optimal randomized online algorithm for $\textbf{FS-PAED}$}

\begin{algorithmic}[1] \REQUIRE $C$,$p_{m}$,$p_{g}$,$p_{e}(t)$,$e(t)$
\ENSURE $u(t)$,$v(t)$ \WHILE{$\tau\in\mathcal{T}$} \STATE
A threshold: $\varsigma=(e(\tau)-C)^{+}$. \STATE For the layers
below $\varsigma$, $v^{k}(\tau)=1$, $u^{k}(\tau)=0$ \STATE For
the layers above $\varsigma$, run \textbf{RED-$k$} with the same randomized parameter $s$
 to obtain $u^{k}(\tau)$
and $v^{k}(\tau)$. \STATE $u(\tau)=\sum_{k}u^{k}(\tau)$, $v(\tau)=\sum_{k}v^{k}(\tau)$
\STATE $\tau=\tau+1$ \ENDWHILE \end{algorithmic} \label{alg:randomized.online}
\end{algorithm}

We demonstrate a toy example of the solution given by \textbf{BED}
in Fig.~\ref{fig:layering2}. For simplicity, we assume the break-even
condition is firstly met when the third nonzero demands comes for
all the layers, and the local capacity is $4$. We use different colors
to demonstrate by which source and for what reason one unit of demand
is satisfied. Even though the example is simple, it demonstrates two
important and provable properties of \textbf{BED}: (i) For each layer,
it will continue to use the grid after it uses it once, and (ii) when
one layer uses the grid, all the layers below it use the grid too.
The first property makes the solution and cost structure similar to
that of \textbf{BED}-$k$, while the second property makes the peak
of $v(t)$ equal to the sum of the peaks of $v^{k}(t)$, \emph{i.e.},
$\max_{t}\sum_{k}v^{k}(t)=\sum_{k}\max_{t}v^{k}(t)$. The two properties
allow us to leverage the results in Sec. \ref{ssec:online.algos.FS-PAED_k}
to establish the competitive ratios of \textbf{BED} and \textbf{RED}.

\begin{figure}
\centering
\includegraphics[width=0.95\columnwidth]{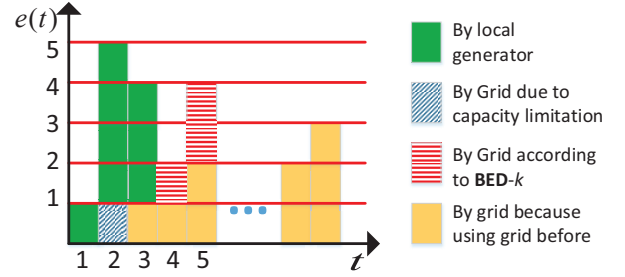}\\
 \protect\caption{Demonstration of \textbf{BED} with $C=4$, different colors denoting different
strategies of the algorithm.}

\label{fig:layering2}
\end{figure}

\begin{theorem}\label{theorem:CR_algs_FS-PAED} The competitive ratios
of \textbf{BED} and \textbf{RED} are given by
\[
\mathbf{CR}\left(\mathbf{RED}\right)=2-\beta,\;\;\;\mbox{and \;\;\;}\mathbf{CR}\left(\mathbf{RED}\right)=\frac{e}{e-1+\beta}.
\]
Further, no other deterministic and randomized online algorithm can
achieve a smaller competitive ratio.

\end{theorem}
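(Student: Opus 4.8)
The plan is to reduce the whole statement to the single-layer results of Sec.~\ref{ssec:online.algos.FS-PAED_k} through the layering decomposition. The lower bound is immediate: every instance of $\textbf{FS-PAED}^{k}$ (one with $e(t)\in\{0,1\}$) is also an instance of \textbf{FS-PAED} as soon as the capacity satisfies $C\ge 1$, since then constraint~\eqref{equ:cap} is never active and the two problems share the same objective and remaining constraints. Hence any online algorithm for \textbf{FS-PAED}, restricted to such inputs, is an online algorithm for $\textbf{FS-PAED}^{k}$, so its competitive ratio on \textbf{FS-PAED} is at least the price of uncertainty of $\textbf{FS-PAED}^{k}$, namely $2-\beta$ for deterministic algorithms and $e/(e-1+\beta)$ for randomized ones (Theorems~\ref{lemma:CR_k_deter} and~\ref{proposition:lower_bound}); these match the ratios of \textbf{BED} and \textbf{RED}, establishing optimality. (If $C=0$ there is no local generation and the problem is solved trivially by always buying from the grid.)

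For the upper bound I would first fix an offline optimum $(\boldsymbol u^{*},\boldsymbol v^{*})$ of \textbf{FS-PAED} and record its structure: since $p_e(t)\le p_g$, one has $u^{*}(t)=e(t)-v^{*}(t)$ and $v^{*}(t)=\min\{e(t),P^{*}\}$, where the single peak level $P^{*}\ge\max_t(e(t)-C)^{+}$ minimizes a convex piecewise-linear function of $P$ and hence may be taken integral. With the layered demands $e^{k}(t)=\mathbf{1}[e(t)\ge k]$ this yields $\textsf{Cost}(\boldsymbol u^{*},\boldsymbol v^{*})=\sum_{k\le P^{*}}\big(\sum_{t}p_e(t)e^{k}(t)+p_m\big)+\sum_{k>P^{*}}\sum_{t}p_g e^{k}(t)$. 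For each layer $k$, let $\textsf{OPT}^{+}_{k}$ be the offline optimum of problem $\textbf{FS-PAED}^{k}$ augmented with the constraint $v^{k}(t)=1$ on every slot with $e(t)\ge k+C$ --- exactly the slots on which \textbf{BED}/\textbf{RED} are forced onto the grid by the capacity layering. Since $u^{*}(t)\le C$ forces $v^{*}(t)\ge k$ at each such slot, the layered components of $(\boldsymbol u^{*},\boldsymbol v^{*})$ are feasible for this augmented problem, and a short case split on $P^{*}$ versus $\max_t(e(t)-C)^{+}$ gives the key decomposition inequality $\sum_{k}\textsf{OPT}^{+}_{k}\le\textsf{Cost}(\boldsymbol u^{*},\boldsymbol v^{*})$.

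Next, invoking the two structural properties of \textbf{BED}/\textbf{RED} (once a layer uses the grid it keeps doing so; whenever a layer uses the grid, so do all lower layers that have demand), one obtains $\max_t v(t)=\sum_{k}\max_t v^{k}(t)$ and hence additivity of the online cost, $\textsf{Cost}(\boldsymbol u,\boldsymbol v)=\sum_{k}\textsf{Cost}(\boldsymbol u^{k},\boldsymbol v^{k})$. It then suffices to prove, per layer, that $\textsf{Cost}(\boldsymbol u^{k},\boldsymbol v^{k})\le(2-\beta)\,\textsf{OPT}^{+}_{k}$ for \textbf{BED} and $\mathbf{E}[\textsf{Cost}(\boldsymbol u^{k},\boldsymbol v^{k})]\le\frac{e}{e-1+\beta}\,\textsf{OPT}^{+}_{k}$ for \textbf{RED}; summing over $k$ and using the decomposition inequality then finishes the proof. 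When layer $k$ is never forced, the augmented problem equals $\textbf{FS-PAED}^{k}$ and these are Theorems~\ref{lemma:CR_k_deter} and~\ref{proposition:CR}. When it is forced, say first at slot $t_0$, one observes $\textsf{OPT}^{+}_{k}=\sum_{t}p_e(t)e^{k}(t)+p_m$ (once the peak charge is unavoidable, serving all slots from the grid is optimal) and re-runs the case analysis of Proposition~\ref{proposition:ratio} with the switch time truncated at $t_0$: either the threshold $s\,p_m$ was already crossed before $t_0$ (the layer behaves exactly as \textbf{BED-}$k$/\textbf{RED-}$k$ and the earlier bounds apply verbatim, the benchmark only being larger), or it was not (the layer's cost equals $\textsf{OPT}^{+}_{k}$ plus the accumulated deficit $\rho\,p_m<s\,p_m$, which one checks keeps the ratio below $2-\beta$ pointwise and, after integrating against $f^{*}$, keeps the expected ratio equal to $e/(e-1+\beta)$).

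The step I expect to be the main obstacle is precisely this last one: re-checking Proposition~\ref{proposition:ratio}'s arithmetic in the presence of the capacity-forced grid slots so that \emph{neither} constant degrades. The deterministic case is a routine but delicate inequality chase (it uses $\rho<s$ together with $p_e(t)\ge\beta p_g$). The randomized case is subtler, because on an individual realization of $s$ the forced-grid ratio can exceed $e/(e-1+\beta)$; one must show that the expectation over $f^{*}$ still collapses to $e/(e-1+\beta)$ exactly as in the proof of Theorem~\ref{proposition:CR}. The remaining ingredient --- the single-peak-level structure of the \textbf{FS-PAED} offline optimum together with $\sum_{k}\textsf{OPT}^{+}_{k}\le\textsf{Cost}(\boldsymbol u^{*},\boldsymbol v^{*})$ --- is comparatively mechanical.
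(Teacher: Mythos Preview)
Your proposal is sound and reaches the same conclusion, but it follows a genuinely different route from the paper on the upper-bound side. The paper does \emph{not} introduce augmented per-layer optima $\textsf{OPT}^{+}_{k}$ or re-run the arithmetic of Proposition~\ref{proposition:ratio} under capacity forcing. Instead, it argues by a reduction: whenever a layer is forced onto the grid at some slot $t_{0}$, it deletes that layer's demand from $t_{0}$ onward and claims that the online and offline costs drop by the \emph{same} amount, so the competitive ratio can only increase; after all such deletions the remaining instance satisfies $e(t)\le C$ everywhere, and for such instances the offline optimum also decomposes layer-wise, whence the per-layer ratios of Theorems~\ref{lemma:CR_k_deter} and~\ref{proposition:CR} immediately give the result. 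The lower-bound argument is the same as yours.

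What each approach buys: the paper's reduction is short and entirely avoids the ``main obstacle'' you flag (no forced-layer case analysis, no reintegration against $f^{*}$), but the ``same amount'' claim is stated without detail and is the step one would most want to scrutinize. Your route is longer but more explicit: the decomposition inequality $\sum_{k}\textsf{OPT}^{+}_{k}\le\textsf{Cost}(\boldsymbol u^{*},\boldsymbol v^{*})$ is transparent (it follows because any forced layer $k$ satisfies $k\le\max_{t}(e(t)-C)^{+}\le P^{*}$ and hence is already served entirely by the grid in the optimum), and the forced-layer ratio bound for \textbf{BED} does go through via the chain you sketch, using $\rho p_{m}<p_{m}$ together with $p_{e}(t)\ge\beta p_{g}$ to get $\rho p_{m}\le(1-\beta)\rho p_{m}+\beta\rho p_{m}\le(1-\beta)\,\textsf{OPT}^{+}_{k}$ after one more substitution. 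The randomized case is handled analogously. So your plan is a correct, more hands-on alternative to the paper's slicker but terser reduction.
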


\begin{proof}
Firstly, if the given input violates the capacity constraints, we can construct a new input respecting the capacity constraints, by sequently removing the demand exceeding the capacity and the following demands in the same layer (like the first layer from time slot 2 in Fig~\ref{fig:layering2}).  Then compared with the original input, the online cost and offline cost are reduced by the same amount, which leads to a larger competitive ratio. Then we only need to focus on the input whose demand is always smaller than the capacity.

Furthermore, due to the second property of the algorithm, we can have $\max_t\sum_kv^k(t) = \sum_k \max_t v^k(t)$, and $\begin{cases}u(t) = \sum_ku^k(t)\\v(t) = \sum_kv^k(t)\end{cases}$. Then $\textsf{Cost}(\boldsymbol u, \boldsymbol v) = \sum_k\textsf{Cost}({\boldsymbol u}^k, {\boldsymbol v}^k)$. This property still holds for the offline cost. We denote $\tilde{r}$ as the competitive ratio for each layer, meaning
$$\textsf{Cost}({\boldsymbol u}^k, {\boldsymbol v}^k) \leq \tilde{r}\text{Cost}^k_{\text{off}}, \forall k.$$

Then by summing the above inequality on $k$, we can have $$\textsf{Cost}(\boldsymbol u, \boldsymbol v) \leq \tilde{r} \text{Cost}_{\text{off}}.$$

For \textbf{BED}, $\tilde{r} = 2-\beta$ and for \textbf{RED}, $\tilde{r} = \frac{e}{e-1+\beta}$ for the randomized case, which establish the upper bound of the competitive ratios.

Furthermore, note that $\textbf{FS-PAED}^k$ is a special case of $\textbf{FS-PAED}$. Since we cannot obtain smaller competitive ratios for $\textbf{FS-PAED}^k$, we cannot obtain smaller competitive ratios for $\textbf{FS-PAED}$.

The proof is completed.
\end{proof}

In the next subsection, we discuss an intriguing consequence of local
generation capacity on the online algorithms' performance.

\subsection{Critical Local Generation Capacity}

\label{part:criticalCapa}


The peak-aware economic dispatching aims at minimizing the sum of
the peak charge (the term $p_{m}\max_{t\in\mathcal{T}}v(t)$ in~\eqref{eq:cost})
and the volume charge (as the remaining part in~\eqref{eq:cost}).
The local generator provides the microgrid an option to use more expensive
electricity (increase the volume charge) to reduce the peak (decrease
the peak charge). An optimal solution is achieved with the best tradeoff
between the two. Given an input, there is a threshold $\tilde{C}$,
the demand below which should be satisfied by the grid and above which
by the local generator. $\tilde{C}$ can be obtained by solving \textbf{FS-PAED}
in an offline fashion without considering capacity constraint. It
means that the optimal offline solution will not use the additional
capacity even if it is larger than $\tilde{C}$.

We now discuss the impact of increasing local generation capacity
$C$ on the performance of offline and online algorithms. The offline
algorithm will use full local capacity until $C$ reaches $\tilde{C}$,
and it will not use local capacity further beyond $\tilde{C}$. As
such, one can expect that the operating cost of the offline algorithm
is non-increasing as $C$ increases. Meanwhile, the online algorithm,
without knowing $\tilde{C}$ and with the tendency of reducing the
peak with more expensive electricity, will try to exploit the whole
capacity until it finds the break even point, which turns out to be
less economic and deviates more from the optimal solution. As a result,
for the online algorithm, larger capacity may incur higher operating
cost. 
We provide a concrete case-study by real world traces to confirm the
above observation in Sec.~\ref{sec:exp}.

Overall, we believe the above insights are important for microgrid
operators to (a) determine the amount of local generation to invest
in order to maximize the economic benefit, and (b) understand the
importance of demand/generation prediction when performing peak-aware
economic dispatching in microgrids. 
\section{Slow-Responding Generator Case}\label{sec:slow}

This section considers the slow-responding generator
scenario, in which the ramping up/down constraints in \eqref{equ:ramping_u}-\eqref{equ:ramping_d}
are non-negligible. We remark that we can still optimally solve the
problem \textbf{PAED} with these constraints in the offline manner
by convex optimization techniques.

In the following analysis, we assume %
\mbox{%
$R^{\textsf{u}}=R^{\textsf{d}}=R$%
} and define $\Gamma=\lceil\frac{C}{R}\rceil$.
Then, it takes $\Gamma$ time slots for the local generator's output
to ramp up from zero to full capacity or down from full capacity to
zero. Considering the time scale of our problem (say, 15 minutes for
each time slot) and the microgrid scenario (high efficiency of local
generators), $\Gamma$ is conceivable to be small. We assume $\Gamma$
is no larger than 5, meaning it roughly takes no more than 75 minutes
for local generator to fully ramp up.

We first show a result highlighting the difficulty introduced by the
ramping constraints in designing competitive online economic dispatching
algorithms.

\begin{proposition} \label{prop:large_CR_with_ramping_constraints}

Any online algorithm for problem \textbf{PAED} without future information, \emph{i.e.}, at time $t$ the algorithm only have knowledge
of $\left\{ e(\tau),p_{e}(\tau)\right\} _{\tau=1}^{t}$, has a competitive
ratio at least $\ensuremath{\frac{p_{m}(C-R)+p_{g}R}{p_{g}(R\Gamma(\Gamma-1)+C)}}$.

\end{proposition}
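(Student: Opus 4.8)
Proof approach. The plan is to show, for an arbitrary deterministic online algorithm $\mathcal{A}$, a single adversarial demand trace on which $\mathcal{A}$ is forced into a large peak charge while the offline optimum stays small, the latter bounded simply by writing down one concrete feasible schedule. Before the main construction I would dispose of ``over-eager'' algorithms: if $\mathcal{A}$ ever sets $u(t)>0$ at a slot $t$ with $e(1)=\dots=e(t)=0$, then on the all-zero input of length $t$ (prices fixed arbitrarily) the offline cost is $0$ while the online cost is at least $p_g u(t)>0$, so $\mathbf{CR}(\mathcal{A})=\infty$ and the claim holds. Hence I may assume $\mathcal{A}$ keeps $u(\cdot)=0$ for as long as the demand it has observed is identically $0$.

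For such an $\mathcal{A}$, fix a large $N\ge\Gamma$ and feed it the trace $e(t)=0$ for $t=1,\dots,N$, a single spike $e(N+1)=C$, and $e(t)=0$ for $t=N+2,\dots,N+\Gamma$, with the spot price pinned to any value in $[p_e^{\min},p_e^{\max}]$ and $p_m\ge p_g$. Because $\mathcal{A}$ cannot distinguish this prefix from the all-zero input, $u_{\mathcal{A}}(N)=0$, so the ramping constraint forces $u_{\mathcal{A}}(N+1)\le R$; thus $v_{\mathcal{A}}(N+1)\ge C-u_{\mathcal{A}}(N+1)$ and the grid peak is at least $C-u_{\mathcal{A}}(N+1)$. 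Writing $w=u_{\mathcal{A}}(N+1)\in[0,R]$, the online cost is at least $p_g w+p_m(C-w)=p_m C-(p_m-p_g)w\ge p_m(C-R)+p_g R$, the minimum over $[0,R]$ being attained at $w=R$ since $p_m\ge p_g$.

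For the denominator I would not solve the offline problem but present one feasible offline schedule that uses no grid power: $u(N+1)=C$ and $u(N+1-j)=u(N+1+j)=(C-jR)^+$ for $j\ge1$, which is feasible because $N\ge\Gamma$ and the horizon has length $N+\Gamma$. Its total generation is $C+2\sum_{j\ge1}(C-jR)^+$; the positive summands are exactly those with $1\le j\le\Gamma-1$, each at most $(\Gamma-j)R$, so the total is at most $C+2R\sum_{i=1}^{\Gamma-1}i=C+R\Gamma(\Gamma-1)$. Hence the offline optimum is at most $p_g\bigl(R\Gamma(\Gamma-1)+C\bigr)$, and dividing the two bounds gives $\mathbf{CR}(\mathcal{A})\ge\frac{p_m(C-R)+p_g R}{p_g(R\Gamma(\Gamma-1)+C)}$; if instead $p_m<p_g$ the stated bound is below $1$ and there is nothing to prove.

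The offline arithmetic and the one-variable minimization are routine; the delicate step is the reduction at the start — justifying that $\mathcal{A}$ may be assumed idle under zero demand, and then exploiting the adversary's freedom to place the spike arbitrarily late. This is exactly where slowness bites: the algorithm provably cannot have pre-ramped the generator, whereas the offline schedule, knowing when the spike arrives, ramps up just in time and pays only generation cost.
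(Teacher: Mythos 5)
The paper states Proposition~\ref{prop:large_CR_with_ramping_constraints} without giving any proof, so there is nothing to compare against; evaluated on its own, your argument is correct and is the natural adversary construction for this bound. The reduction to ``idle under zero demand'' is sound for deterministic algorithms (and extends to randomized ones: an algorithm that generates with positive probability on the all-zero prefix already has unbounded ratio on the all-zero input, and otherwise $u(N)=0$ almost surely, so the pointwise cost bound holds in expectation); the single-spike trace forces $u(N+1)\le R$ by the ramping constraint, giving online cost at least $p_gw+p_m(C-w)\ge p_m(C-R)+p_gR$ over $w\in[0,R]$ when $p_m\ge p_g$; and your explicit triangular ramp-up/ramp-down schedule is feasible, uses no grid power, and its generation total $C+2\sum_{j=1}^{\Gamma-1}(C-jR)^+\le C+R\Gamma(\Gamma-1)$ reproduces the denominator exactly. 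Two small caveats: the final dismissal ``if $p_m<p_g$ the bound is below $1$'' relies on $C\ge R$ (i.e.\ $\Gamma\ge 2$); for $\Gamma=1$ the stated expression can exceed $1$ even with $p_m<p_g$, but that is the fast-responding regime outside the proposition's intent, and the paper's standing assumptions ($p_m$ roughly two orders of magnitude above $p_g$, slow-responding generators) put you squarely in the case your construction covers. Also, if the horizon $T$ is fixed and known to the algorithm, replace ``the all-zero input of length $t$'' by the all-zero input over the full horizon; the argument is unchanged since the algorithm's behavior at time $t$ depends only on the observed prefix.
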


When $\Gamma$ is $5$ and $p_{m}$ is 100 times of $p_{g}$, a back-of-envelop
calculation reveals that the lower bound of the competitive ratio
can be as large as $20$. This result shows that the ramping constraints
will make any online algorithm design less attractive as the worst
performance can be rather bad. The conventional method to address
this problem is to put additional constraints on the input to obtain algorithms with reasonable performance guarantee
\cite{narayanaswamy2012online}. In this paper, we propose to handle
the challenge incurred by ramping constraints by a different approach;
that is to empower the algorithm with a limited looking ahead window.

\subsection{An Effective Online Algorithm by a Limited Looking-ahead Window}

Motivated by the development of prediction algorithms \cite{taylor2006comparison,zhu2011case},
we assume that a limited looking ahead window with size of $\Delta=\Gamma-1$
is available, which means that at time $t$ we can know the input
from $t+1$ to $t+\Delta$ in advance. In this section, we devise
an online algorithm by leveraging such looking ahead information as
well as the results in Sec~\ref{sec:fast}. We name the proposed
algorithm as \algc (Neutralize Ramping constraint By Future information).
We denote the online solutions we obtain for \textbf{PAED} by \algc
as $\tilde{u}(t),\tilde{v}(t)$.

In \algc, we first solve problem \textbf{FS-PAED} by relaxing the
ramping constraints from problem $\textbf{PAED}$ and denote the solutions
obtained by algorithms \textbf{BED} or \textbf{RED} as $u(t)$ and
$v(t)$. We then adjust them to obtain online solutions for $\textbf{PAED}$
that satisfy the ramping constraints. Specifically, we compute $\tilde{u}(t)$
as
\[
\tilde{u}(t)=\max\{\tilde{u}(t-1)-R,u(t+i)-iR|i=0,1,..,\Delta\},
\]
and $\tilde{v}(t)=\max\left(e(t)-\tilde{u}(t),0\right)$.

The following lemma shows that the ramping constraints are respected
by \textsf{NRBF}.

\begin{lemma}\label{lemma:ramping_feasibility} The solutions by
\algc satisfy the generator's ramping constraints, \textit{i.e.},
\[
|\tilde{u}(t+1)-\tilde{u}(t)|\leq R.
\]
\end{lemma}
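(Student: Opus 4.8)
The plan is to verify the two inequalities $\tilde u(t+1)-\tilde u(t)\le R$ and $\tilde u(t)-\tilde u(t+1)\le R$ separately, working directly from the defining recursion
\[
\tilde u(t)=\max\{\tilde u(t-1)-R,\; u(t+i)-iR \mid i=0,1,\dots,\Delta\}.
\]

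First I would establish the ramp-down direction, $\tilde u(t)-\tilde u(t+1)\le R$, i.e. $\tilde u(t+1)\ge \tilde u(t)-R$. This is immediate: the term $\tilde u(t+1-1)-R=\tilde u(t)-R$ literally appears in the max that defines $\tilde u(t+1)$, so $\tilde u(t+1)$ is at least that large. No work is needed here beyond pointing at the recursion.

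The ramp-up direction, $\tilde u(t+1)-\tilde u(t)\le R$, is the substantive part. I would argue by cases on which term achieves the maximum in $\tilde u(t+1)=\max\{\tilde u(t)-R,\ u(t+1+i)-iR\mid i=0,\dots,\Delta\}$. If the maximizer is $\tilde u(t)-R$, then $\tilde u(t+1)=\tilde u(t)-R\le \tilde u(t)+R$ and we are done. Otherwise $\tilde u(t+1)=u(t+1+i)-iR$ for some $i\in\{0,\dots,\Delta\}$; then $\tilde u(t+1)=u\big(t+(i+1)\big)-(i+1)R+R$. Now if $i+1\le\Delta$, the term $u(t+(i+1))-(i+1)R$ is one of the terms in the max defining $\tilde u(t)$, so $\tilde u(t)\ge u(t+(i+1))-(i+1)R=\tilde u(t+1)-R$, giving the claim. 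The only remaining subcase is $i=\Delta$, where $i+1=\Delta+1=\Gamma$ lies outside the window and the above shortcut fails; this is where I expect the main obstacle.

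To close the $i=\Delta$ subcase I would fall back on a monotonicity/lower-bound property of $\tilde u$ combined with the ramping feasibility of the underlying relaxed solution. The key auxiliary fact is that $\tilde u(t)\ge u(t)$ for every $t$ (take $i=0$ in the max), and, more usefully, an inductive invariant of the form $\tilde u(t)\ge u(t+i)-iR$ for all $i=0,\dots,\Delta$ — which holds by construction — so in particular $\tilde u(t)\ge u(t+\Delta)-\Delta R$. Since in this subcase $\tilde u(t+1)=u(t+1+\Delta)-\Delta R$, it then suffices to show $u(t+1+\Delta)-u(t+\Delta)\le R$, i.e. that the relaxed solution $u(\cdot)$ produced by \textbf{BED}/\textbf{RED} itself changes by at most $R$ between the consecutive slots $t+\Delta$ and $t+\Delta+1$. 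That, however, is \emph{not} generally true of the unconstrained solution, so the real argument must instead exploit that $\Delta=\Gamma-1$ and $\Gamma=\lceil C/R\rceil$: over any $\Gamma$ consecutive slots the total change in $u$ is bounded by $C$ (the capacity), so one of the $\Delta+1=\Gamma$ consecutive gaps covered by the window is at most $C/\Gamma\le R$, and choosing the maximizing index $i$ in the recursion automatically picks up the benefit of that small gap. I would make this precise by showing that for \emph{every} $t$ there exists $i^\star\in\{0,\dots,\Delta\}$ with $u(t+1+i^\star)-u(t+i^\star)\le R$, feed $i^\star$ into the max for $\tilde u(t)$, and chain the inequalities; formalizing this pigeonhole step over the window, and confirming that the \textbf{BED}/\textbf{RED} solutions indeed obey $0\le u(t)\le C$ so that total variation over $\Gamma$ slots is at most $C$, is the crux of the proof.
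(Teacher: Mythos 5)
Your treatment of the ramp-down direction and of the ramp-up subcases with $i+1\le\Delta$ is correct and coincides with the paper's proof. The genuine gap is in how you close the remaining subcase $i=\Delta$, where $\tilde u(t+1)=u(t+1+\Delta)-\Delta R$. Your pigeonhole step produces an index $i^\star$ with $u(t+1+i^\star)-u(t+i^\star)\le R$, and feeding $i^\star$ into the max defining $\tilde u(t)$ yields $\tilde u(t)+R\ge u(t+1+i^\star)-i^\star R$. But that bounds only \emph{one particular term} of the max defining $\tilde u(t+1)$; in this subcase the max is attained at $i=\Delta$, and maximality gives $u(t+1+\Delta)-\Delta R\ge u(t+1+i^\star)-i^\star R$, an inequality pointing the wrong way. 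Nothing in your chain forces the maximizing term down to $\tilde u(t)+R$. (Relatedly, one small gap somewhere in the window does not control the telescoped difference $u(t+1+\Delta)-u(t+j)$ that you would actually need.)

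The fix is simpler than the pigeonhole and uses exactly the two ingredients you already flagged as needing confirmation, namely $0\le u(\cdot)\le C$ and $\Gamma R\ge C$ (from $\Gamma=\lceil C/R\rceil$). In the subcase $i=\Delta=\Gamma-1$,
\[
\tilde u(t+1)=u(t+\Gamma)-(\Gamma-1)R\le C-(\Gamma-1)R\le \Gamma R-(\Gamma-1)R=R\le \tilde u(t)+R,
\]
where the last step uses $\tilde u(t)\ge u(t)\ge 0$ (take $i=0$ in its defining max). This is precisely how the paper handles it: it rewrites $\tilde u(t+1)$ as $\max\{0,\,u(t+j)-jR \mid j=1,\dots,\Gamma\}+R$ and drops the $j=\Gamma$ term because $u(t+\Gamma)-\Gamma R\le C-\Gamma R\le 0$, after which every surviving term is dominated by $\tilde u(t)$. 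With that replacement your argument is complete and otherwise identical to the paper's.
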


\begin{proof}
By the obtainment of $\tilde{u}(t+1)$, we can have $\tilde{u}(t+1)\geq \tilde{u}(t)-R$, i.e
$$\tilde{u}(t)-\tilde{u}(t+1)\leq R$$

Next, we prove $\tilde{u}(t+1)-\tilde{u}(t)\leq R$ for two cases.

Firstly, if $\tilde{u}(t+1) = (\tilde{u}(t)-R)^+$, the conclusion holds obviously.

Secondly, if for some $i^*\in[0,w-1]$ such that $\tilde{u}(t+1) = u(t+1+i^*)-i^*R$, we can have
\begin{align*}
\tilde{u}(t+1) &=\max\{0, u(t+1+i)-iR|i\in[0,\Delta]\}\\
&=\max\{0,u(t+1+i)-(i+1)R+R|i\in[0,\Delta]\} \\
&=\max\{0,u(t+j)-jR|j\in [1,\Gamma]\}+R\}\\
&=\max\{0,u(t+j)-jR|j\in [1,\Delta]\}+R\}
\end{align*}

The last step is due to the fact that $u(t+w)\leq C\leq wR$. and since $\tilde{u}(t) =\max\{0,\tilde{u}(t-1)-R, \max\{u(t+j)-jR|i\in [0,w-1]\}\}$, we can have $\tilde{u}(t)\geq \max\{0,u(t+j)-jR|j\in[0,\Delta]\}\geq \{0,u(t+j)-jR|j\in[1,\Delta]\}$.

Then it's easy to see that $\tilde{u}(t+1) \leq \tilde{u}(t)+R$, i.e
$\tilde{u}(t+1)-\tilde{u}(t)\leq R,$
which means the ramping up constraint is satisfied and the proof is completed.
\end{proof}

Meanwhile, we can have $\tilde{v}(t)\leq v(t)$, meaning the peak
charge is upper bounded by that of the fast responding scenario. We
leverage this observation to show the competitiveness of \textsf{NRBF},
as shown in Theorem~\ref{theorem:CR_ramp}.

\begin{theorem}\label{theorem:CR_ramp} The competitive ratio of
\algc satisfies
\[
\mathbf{CR}\left(\mbox{\algc}\right)\leq\begin{cases}
\Gamma\left(2-\beta\right), & \mbox{if }u(t),\, v(t)\mbox{ are obtained by \textbf{BED}};\\
\Gamma\frac{e}{e-1+\beta}, & \mbox{if }u(t),\, v(t)\mbox{ are obtained by \textbf{RED}}.
\end{cases}
\]
\end{theorem}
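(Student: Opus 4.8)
The plan is to bound the cost of \algc in terms of the cost of the relaxed solution $(\boldsymbol u, \boldsymbol v)$ produced by \textbf{BED} or \textbf{RED}, and then invoke Theorem~\ref{theorem:CR_algs_FS-PAED} together with the fact that the offline optimum of \textbf{PAED} is no smaller than the offline optimum of \textbf{FS-PAED} (relaxing the ramping constraints can only decrease the optimal cost). So it suffices to show that $\textsf{Cost}(\tilde{\boldsymbol u}, \tilde{\boldsymbol v}) \leq \Gamma \cdot \textsf{Cost}(\boldsymbol u, \boldsymbol v)$, after which $\mathbf{CR}(\algc) \le \Gamma \cdot \tilde r$ with $\tilde r$ the competitive ratio of \textbf{BED} or \textbf{RED} from Sec.~\ref{sec:fast}.

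First I would handle the peak and volume charges of the grid. The key structural fact, already noted before the theorem, is $\tilde v(t) = (e(t)-\tilde u(t))^+ \le (e(t)-\tilde u(t-1)+R)^+$ and, more importantly, $\tilde v(t) \le v(t)$ for every $t$: since $\tilde u(t) \ge u(t+i)-iR$ for $i=0$ in particular gives $\tilde u(t) \ge u(t)$, hence $\tilde v(t) = (e(t)-\tilde u(t))^+ \le (e(t)-u(t))^+ = v(t)$ (using $u(t)+v(t) \ge e(t)$ and $v(t)\ge 0$). This immediately gives $\sum_t p_e(t)\tilde v(t) \le \sum_t p_e(t) v(t)$ and $p_m \max_t \tilde v(t) \le p_m \max_t v(t)$, so the entire grid portion of $\textsf{Cost}(\tilde{\boldsymbol u},\tilde{\boldsymbol v})$ is bounded by the grid portion of $\textsf{Cost}(\boldsymbol u, \boldsymbol v)$.

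The main work — and the main obstacle — is bounding the local generation cost $\sum_t p_g \tilde u(t)$ by $\Gamma \sum_t p_g u(t)$, i.e. showing $\sum_t \tilde u(t) \le \Gamma \sum_t u(t)$. Here I would expand $\tilde u(t) = \max\{\tilde u(t-1)-R,\; u(t+i)-iR \mid i=0,\dots,\Delta\} \le \max\{\tilde u(t-1)-R, 0\} + \max\{u(t+i) \mid i=0,\dots,\Delta\}$, but that crude bound is too lossy; instead the argument should charge each unit of $\tilde u(t)$ to nearby values of $u$. The cleanest route: show that whenever $\tilde u(t) > 0$ it is "explained" by some $u(t+i)$ with $i \le \Delta = \Gamma-1$ that is positive (tracing the recursion: either $\tilde u(t) = u(t+i^*)-i^*R$ for some $i^* \in [0,\Delta]$, in which case $u(t+i^*) \ge \tilde u(t) > 0$; or $\tilde u(t) = \tilde u(t-1)-R$, and one unwinds backwards a bounded number of steps until hitting a look-ahead term). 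Since the ramp-down from any positive level to zero takes at most $\Gamma$ slots, and each positive $u(s)$ can be "responsible" for at most $\Gamma$ slots of $\tilde u$ (the window has width $\Gamma-1$ plus the ramp-down tail collapses because $u(t+\Gamma)\le C\le \Gamma R$ as used in Lemma~\ref{lemma:ramping_feasibility}), a careful accounting yields $\sum_t \tilde u(t) \le \Gamma \sum_s u(s)$. I expect the bookkeeping of this charging argument — ensuring no $u(s)$ is over-counted beyond a factor $\Gamma$ and that the $\tilde u(t-1)-R$ branch does not create an unbounded chain — to be the delicate part; the telescoping identity $u(t+i)-iR = u(t+i)-(i+1)R+R$ from the proof of Lemma~\ref{lemma:ramping_feasibility}, combined with $u(t+\Gamma)\le \Gamma R$, is the tool that keeps the chain length bounded by $\Gamma$. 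Once $\sum_t \tilde u(t) \le \Gamma \sum_t u(t)$ is in hand, combining with the grid-cost bound gives $\textsf{Cost}(\tilde{\boldsymbol u},\tilde{\boldsymbol v}) \le \Gamma\,\textsf{Cost}(\boldsymbol u,\boldsymbol v)$, and the theorem follows.
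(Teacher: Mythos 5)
Your proposal is correct and follows the same overall architecture as the paper's proof: bound the grid portion of \algc's cost by that of \textbf{BED}/\textbf{RED} via $\tilde{v}(t)\le v(t)$ (which the paper isolates as a small lemma), bound the local-generation portion by a factor $\Gamma$, and then combine with $\mathbf{CR}(\textbf{BED}/\textbf{RED})$ using the fact that relaxing the ramping constraints can only lower the offline optimum. The one place you genuinely diverge is the factor-$\Gamma$ bound on $\sum_t p_g\tilde{u}(t)$: the paper dilates time, inserting up to $\Gamma-1$ zero-demand ramp-up and ramp-down slots around each slot so that each "segment" costs at most $\Gamma$ times its unramped counterpart, whereas you propose charging each $\tilde{u}(t)$ to a nearby $u(s)$. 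Your route does close: unrolling the recursion gives $\tilde{u}(t)=\max_{k\le\Delta}\left(u(t+k)-|k|R\right)^{+}$, so each $t$ charges $\tilde u(t)\le u(s)-|s-t|R$ to its maximizer $s$, distinct chargers of a fixed $s$ have distinct offsets $d=s-t$, and $\sum_{d}\left(u(s)-|d|R\right)^{+}\le \Gamma\, u(s)$ (using $u(s)\le C\le \Gamma R$), giving $\sum_t\tilde{u}(t)\le\Gamma\sum_s u(s)$ without the "no over-counting" worry you flagged. If anything, this charging argument is tighter and more self-contained than the paper's segment-expansion sketch, whose per-segment inequality $R\Gamma(\Gamma-1)+u(m)\le\Gamma u(m)$ is only immediate when $u(m)$ is near full capacity.
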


\begin{proof}

According to the algorithm we can easily get
\begin{lemma}\label{lemma:ramping_cost_grid}
For $\tilde{u}(t)$,$\tilde{v}(t)$ obtain by \algc, we can have
$\tilde{v}(t)\leq v(t)$, and thus $\max_t\tilde{v}(t)\leq \max_t v(t)$
\end{lemma}

We denote the online cost of \textsf{NRBF} as $\tilde{C}_{\textrm{on}}$ and the optimal offline cost of the same problem as $\tilde{C}_{\textrm{off}}$. For the problem without ramping constraints, we denote the online cost of \textbf{BED} or \textbf{RED} as $C_{\textrm{on}}$ and the optimal offline cost of the same problem as $C_{off}$. We can directly have $\tilde{C}_{\textrm{off}}\geq C_{\textrm{off}}$. Then
\begin{align*}
\frac{\tilde{C}_{\textrm{on}}}{\tilde{C}_{\textrm{off}}} &\leq \frac{\tilde{C}_{\textrm{on}}}{C_{\textrm{off}}}\\
& = \frac{\tilde{C}_{\textrm{on}}}{C_{\textrm{on}}}*\frac{C_{\textrm{on}}}{C_{\textrm{off}}}\\
&\leq \frac{\tilde{C}_{\textrm{on}}}{C_{\textrm{on}}}*\mathcal{CR}
\end{align*}

If $\frac{\tilde{C}_{\textrm{on}}}{C_{\textrm{on}}}\leq 1$, this theorem holds obviously; otherwise, we denote $\begin{cases}\tilde{C}_{\textrm{on}} = \tilde{C}_{\textrm{on}}^l+\tilde{C}_{\textrm{on}}^g\\ C_{\textrm{on}} = C_{\textrm{on}}^l+C_{\textrm{on}}^g\end{cases}$, where the superscripts $l$ and $g$ represent the costs from local generators and external grid respectively.

By Lemma~\ref{lemma:ramping_cost_grid}, we can have $\tilde{C}_{\textrm{on}}^g\leq C_{\textrm{on}}^g$; then
\begin{equation*}
\frac{\tilde{C}_{\textrm{on}}}{C_{\textrm{on}}}\leq \frac{\tilde{C}_{\textrm{on}}^l}{C_{\textrm{on}}^l}
\end{equation*}

Next we obtain an upper bound for $\frac{\tilde{C}_{\textrm{on}}^l}{C_{\textrm{on}}^l}$ by modifying the input a little bit. We expand each time slot to one segment by inserting $\frac{u(t)}{R}$ intervals with $0$ demand before $t$ and another $\frac{u(t)}{R}$ intervals with $0$ demands after $a(n)$.
Then, the online cost stays the same for the problem without ramping constraints. Furthermore, the online cost for the problem with ramping constraints increases.

For the $m^{th}$ segment,$C_{\textrm{on}}^l(m) = p_gu(m)$, while $\tilde{C}_{\textrm{on}}^n(m) =p_g( \sum_{i=1}^{\Gamma-1}iR+u(m)+\sum_{i=1}^{\Gamma-1}iR)\leq p_g*u(m)\Gamma$; meaning for any $m$ we can have
$$\frac{\tilde{C}_{\textrm{on}}^l(m)}{C_{\textrm{on}}^n(m)}\leq \Gamma$$

Then
\begin{align*}
\frac{\tilde{C}_{\textrm{on}}^l}{C_{\textrm{on}}^l} &=\frac{\sum_m\tilde{C}_{\textrm{on}}^l(m)}{\sum_mC_{\textrm{on}}^l(m)}\\
&\leq \Gamma
\end{align*}

Further we can have $\frac{\tilde{C}_{on}}{\tilde{C}_{off}}\leq \frac{C}{R}\mathcal{CR}$, which completes the proof.

\end{proof}

\textbf{Remark}: Small values of $\Gamma$, which mean the ramping
constraints are less strict, will lead to a small bound on the competitive
ratio. Moreover, $\Gamma=1$ indicates that the generator output can
ramp up to its full capacity in one time slot and the ramping constraints
vanish, thereby we do not need any future information ($\Delta=0$)
and the competitive ratio is exactly the same with that of the fast-responding
generator scenario. 
\section{Experimental Results}\label{sec:exp}
We carry out numerical experiments using real-world traces to scrutinize the performance of our online algorithms under various practical settings. Our purpose is to investigate (i) the competitiveness of online algorithms in comparison with the optimal offline one, (ii) the necessity of peak-awareness in economic dispatching of microgrids, and (iii) the performance of online algorithms under various parameter settings.

\begin{figure}[t]
\centering
\begin{minipage}[!htbp]{0.50\columnwidth}
\includegraphics[scale=\myScl]{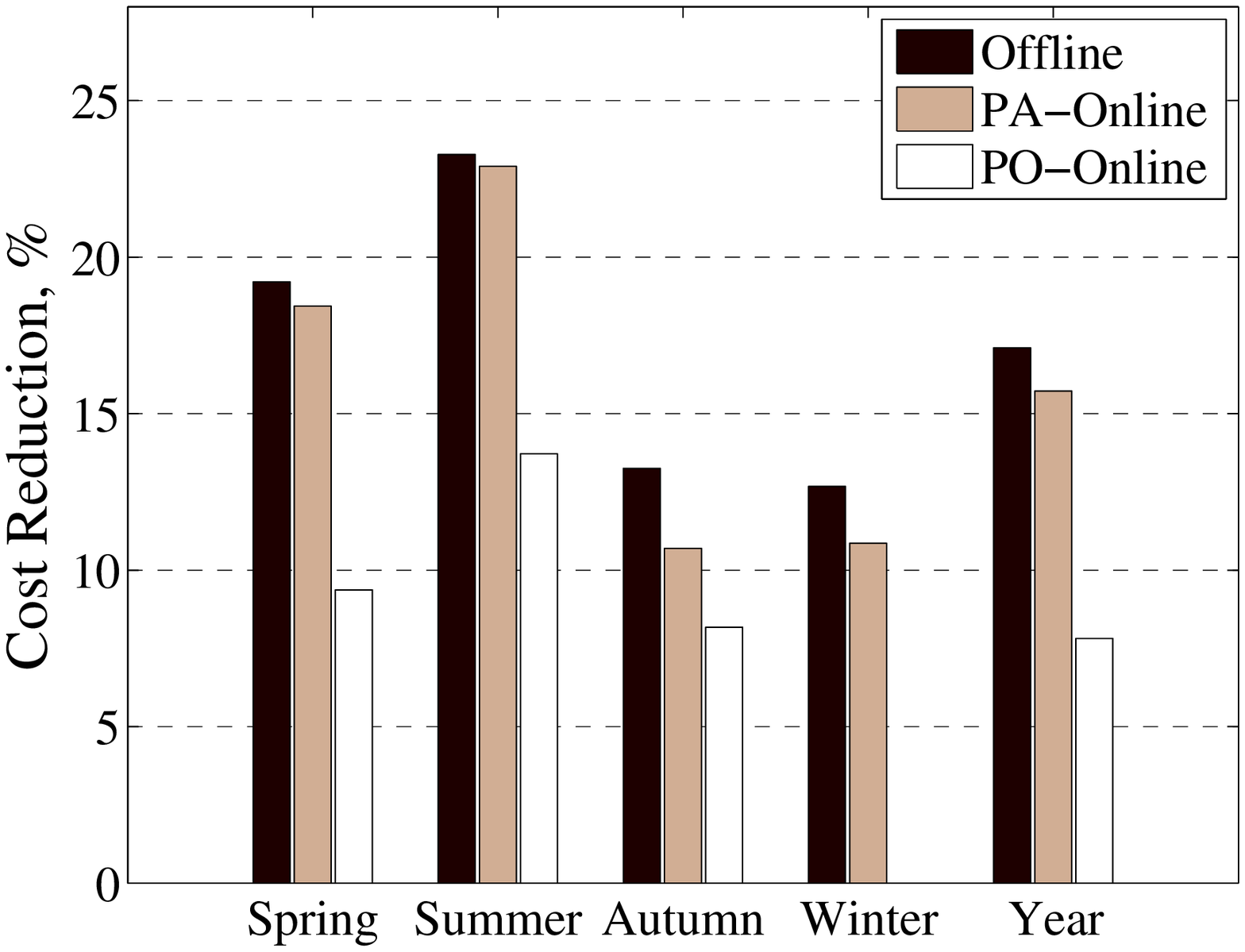}
\caption{Cost reduction for different seasons and the whole year}
\label{fig:RSeason}
\end{minipage}%
\begin{minipage}[!htbp]{0.50\columnwidth}
\includegraphics[scale=\myScl]{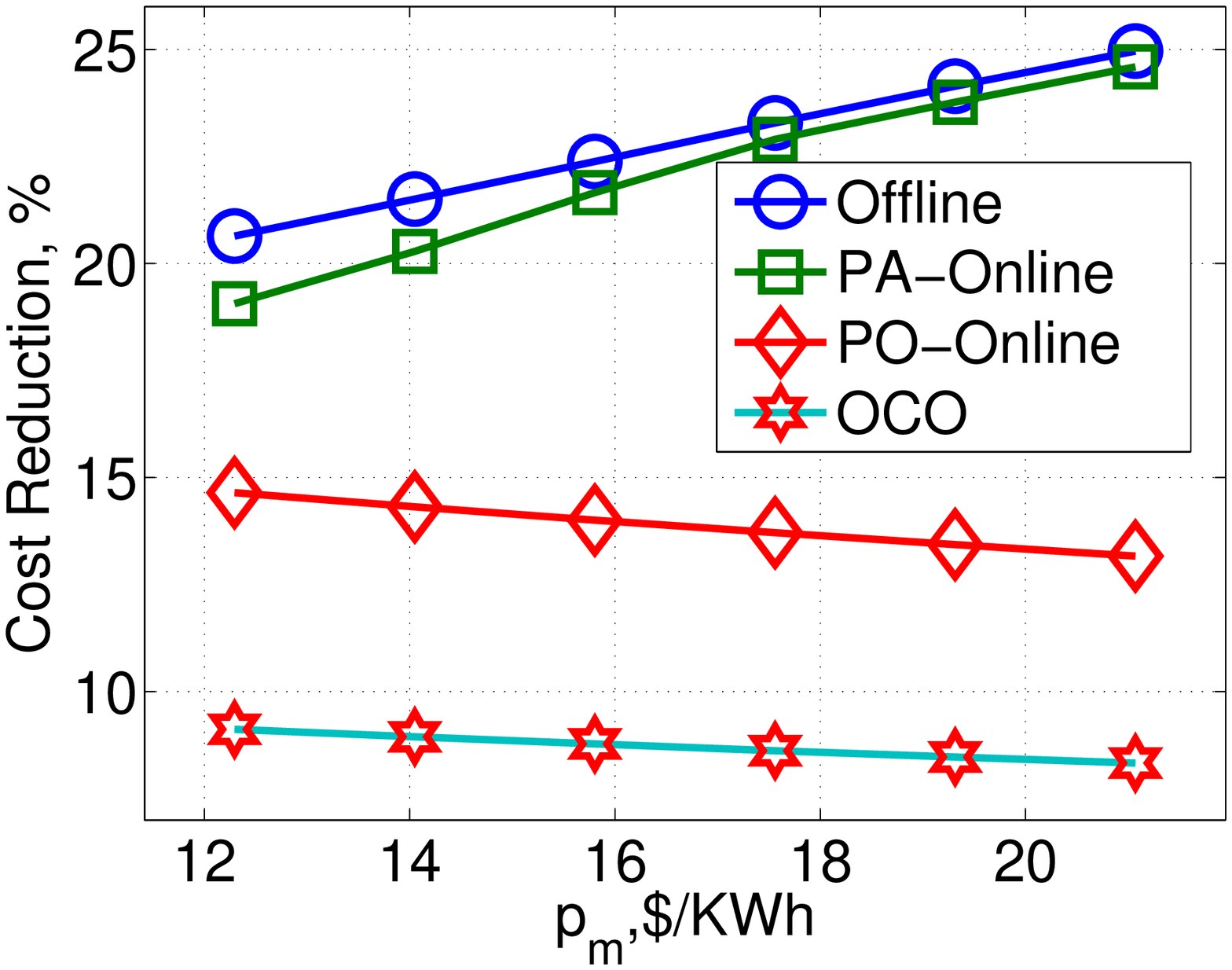}
\caption{Cost reduction vs $p_m$}
\label{fig:RPeakP}
\end{minipage}
\end{figure}
\subsection{Experimental Setup}
\textbf{Electricity demand and renewable generation traces.} We set the length of one billing circle as one month. We use the actual electricity demand of a college in San Francisco; its yearly demand is about $154\textrm{GWh}$~\cite{college}. We inject renewable energy supply sources by a wind power trace of a nearby offshore wind station outside San Francisco with a total installed capacity of $12\textrm{MW}$ \cite{wind}. We then construct the net demand by subtracting the output level of the wind from the college electricity demand.

\textbf{Energy source parameters.} The electricity price $p_e(t)$ and peak price $p_m$ are set based on the tariffs from PG\&E\cite{pge} 
and $p_m = 17.56\$/\textrm{KWh}$ while the electricity rate $p_e(t)$ varies from $0.056\$/\textrm{KWh}$ to $0.232\$/\textrm{KWh}$ for off-, mid-, and on-peak periods in different seasons. We set the unit cost of local generation $p_g$ according to the monthly price of natural gas. Notably, the value of $p_g$ could be less than $p_e(t)$ for some on-peak intervals. In such situations, generator plays its role not only by cutting off the peak but also by providing cheaper electricity as well. Finally, if not specified, the capacity of the local generator is set to be $C=15\textrm{MWh}$, which is around $60\%$ of the peak net demand.

\textbf{Cost benchmark.} We use the cost incurred by only procuring electricity from the external grid, \emph{i.e.}, $v(t)=e(t)$, as the benchmark. We demonstrate cost reduction to show the benefit of employing local generation units and the effectiveness of algorithms. The cost reduction originates from the cheaper electricity (in some on-peak intervals) and peak cut-off by local generators. 

\textbf{Comparison of algorithms.} We compare our proposed peak-aware online \myPrb (\textsf{PA-Online}) algorithms  with (i) the optimal peak-aware offline solution (\textsf{OFFLINE}) to evaluate the performance of the online algorithms, and (ii) the peak-oblivious online algorithms (\textsf{PO-Online}) in \cite{minghua_sigmetrics} and online convex optimization approach (\textsf{OCO}) in \cite{narayanaswamy2012online} to investigate the importance of peak-awareness.\footnote{We remark that in \cite{minghua_sigmetrics}, the joint unit commitment and economic dispatching problem in peak-oblivious manner is addressed and in this paper we compare the economic dispatching part with our algorithms. \textsf{OCO} in \cite{narayanaswamy2012online} (without considering the peak charge) is designed deliberately to tackle the ramping constraints and may suffer performance loss in the fast responding generator scenario.} We remark that both schemes in ~\cite{minghua_sigmetrics,narayanaswamy2012online} are peak-oblivious as they only consider volume charge but ignore peak charge.

The results reported in Secs.~\ref{part:season}-~\ref{part:capa} cover the fast-responding generator scenarios and Sec.~\ref{part:ramp} is devoted to the slow-responding generator scenario. 

\subsection{Benefits of Employing Local Generators}\label{part:season}
\textbf{Purpose.}
The purpose of this experiment is two-fold. First, compare the potential savings of microgrid in different seasons, in which the demand pattern, the wind output, and the cost parameters differ. Second, compare the cost reduction of peak-aware algorithms against peak-oblivious ones. The results are shown in Fig.~\ref{fig:RSeason}.

\textbf{Observations.}
The most notable observations from Fig.~\ref{fig:RSeason} are the following. First of all, the cost reduction varies over seasons and the most significant one occurs in the summer. This is because the gas price is lower and the grid electricity price is higher in the summer than those of the other seasons, thus employing local generators brings more benefit. Second, the performance of our proposed \paonline is superior than \poonline algorithm. In particular, \poonline cannot reduce the cost in the winter, but our algorithm \paonline can still achieve cost reduction.  The reason is that, as $p_g > p_e(t)$ always holds in the winter, \mbox{\poonline} algorithm always purchases cheaper electricity from the gird, which gives no cost reduction as compare to the benchmark strategy. In contrast, our \paonline algorithm reduces the cost by exploiting (the expensive) local generation to reduce the peak demand served by the external grid, and consequently our algorithm can save operating cost. On average, \paonline reduces the annual cost by $15.7\%$, while \poonline reduces the cost only by $8.17\%$. \textit{Third}, the performance of \paonline in practice is close to that of the offline optimal.


\begin{figure}[t]
\centering
\begin{minipage}[!htbp]{0.5\columnwidth}
\includegraphics[scale=\myScl]{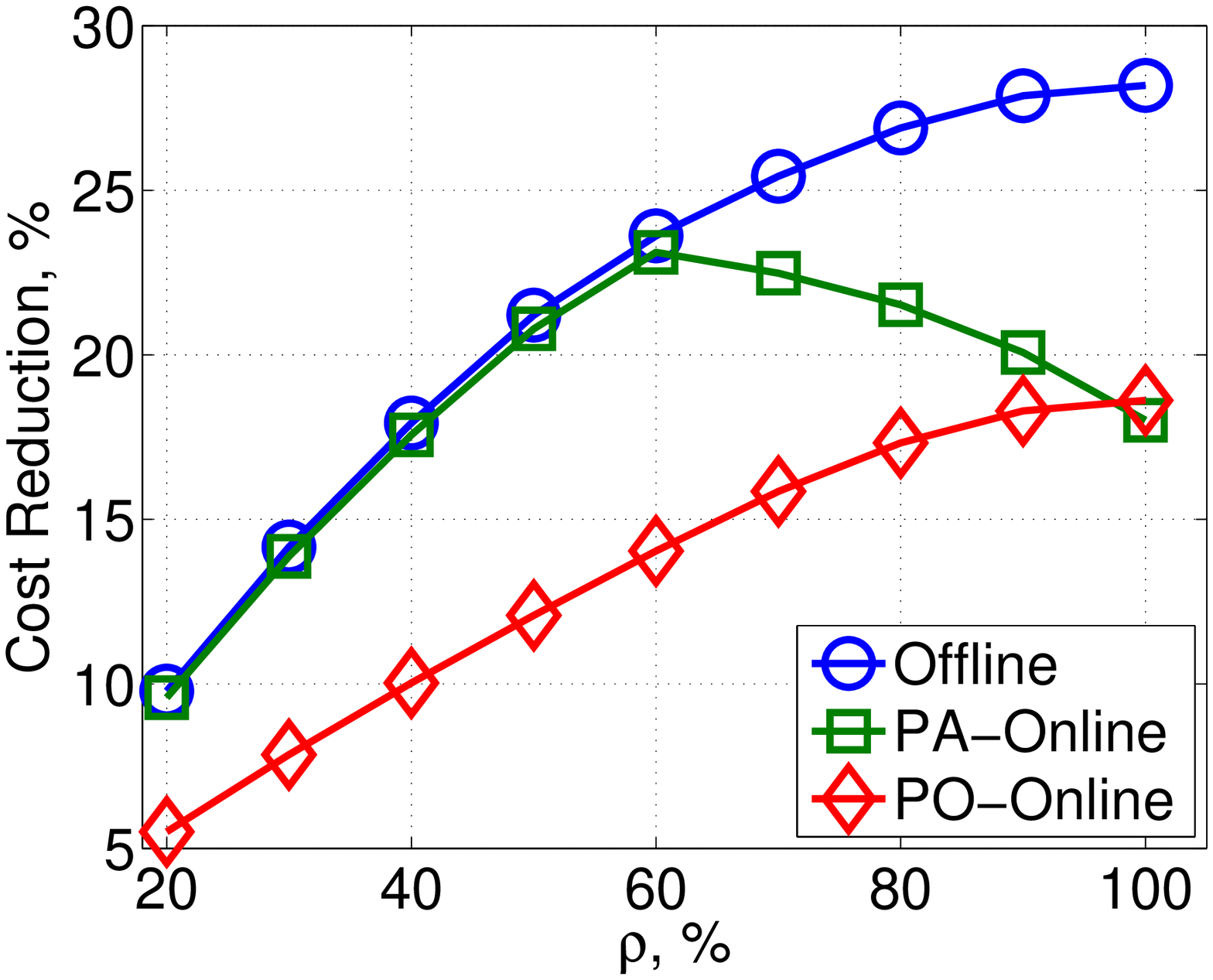}
\caption{Cost reduction vs $\rho$}
\label{fig:ReductionCapa}
\end{minipage}%
\begin{minipage}[!htbp]{0.5\columnwidth}
\includegraphics[scale=\myScl]{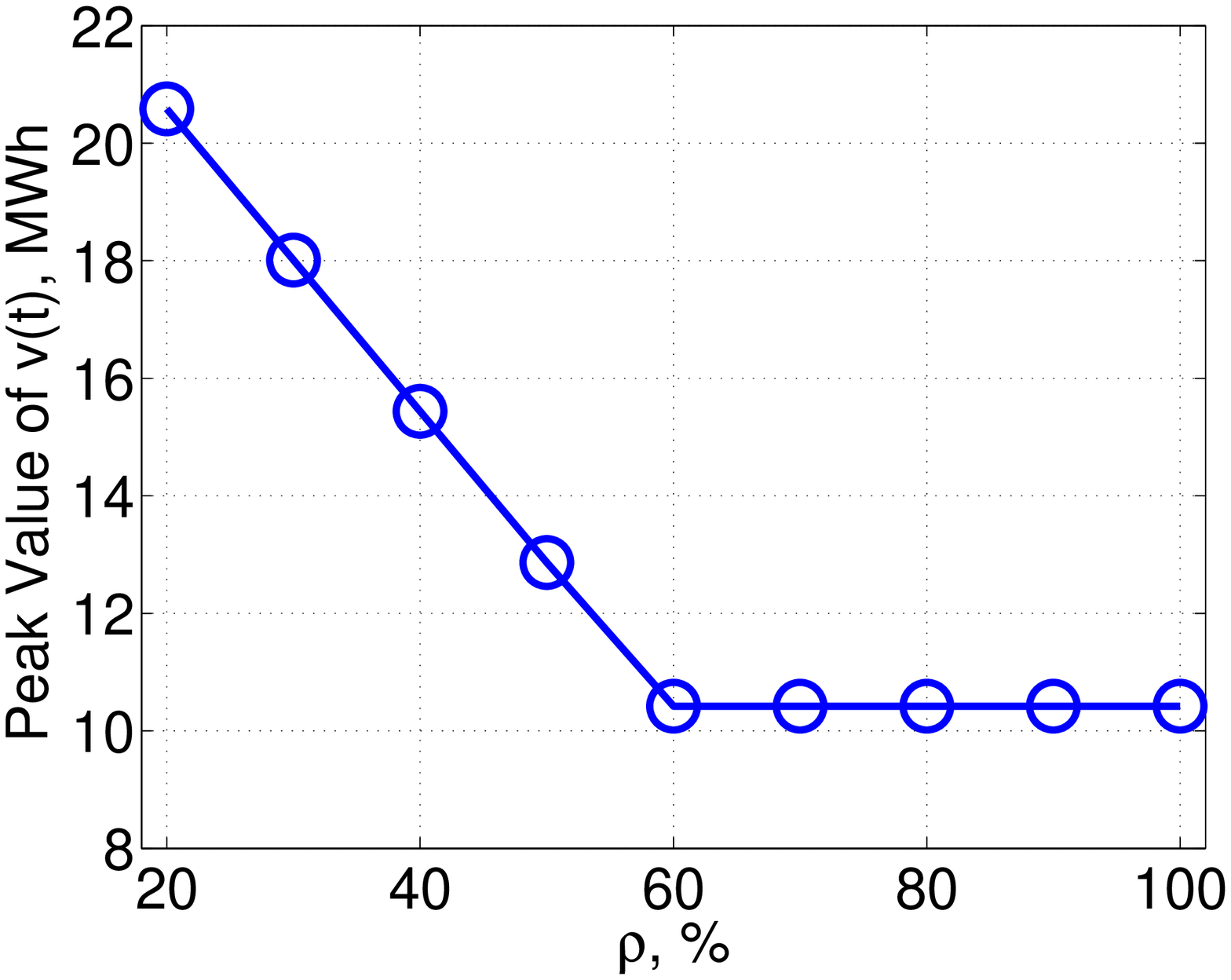}
\caption{Peak value of $v(t)$ vs $\rho$}
\label{fig:VPCapa}
\end{minipage}
\end{figure}

\begin{figure}
\begin{minipage}[!htbp]{0.50\columnwidth}
\includegraphics[scale=\myScl]{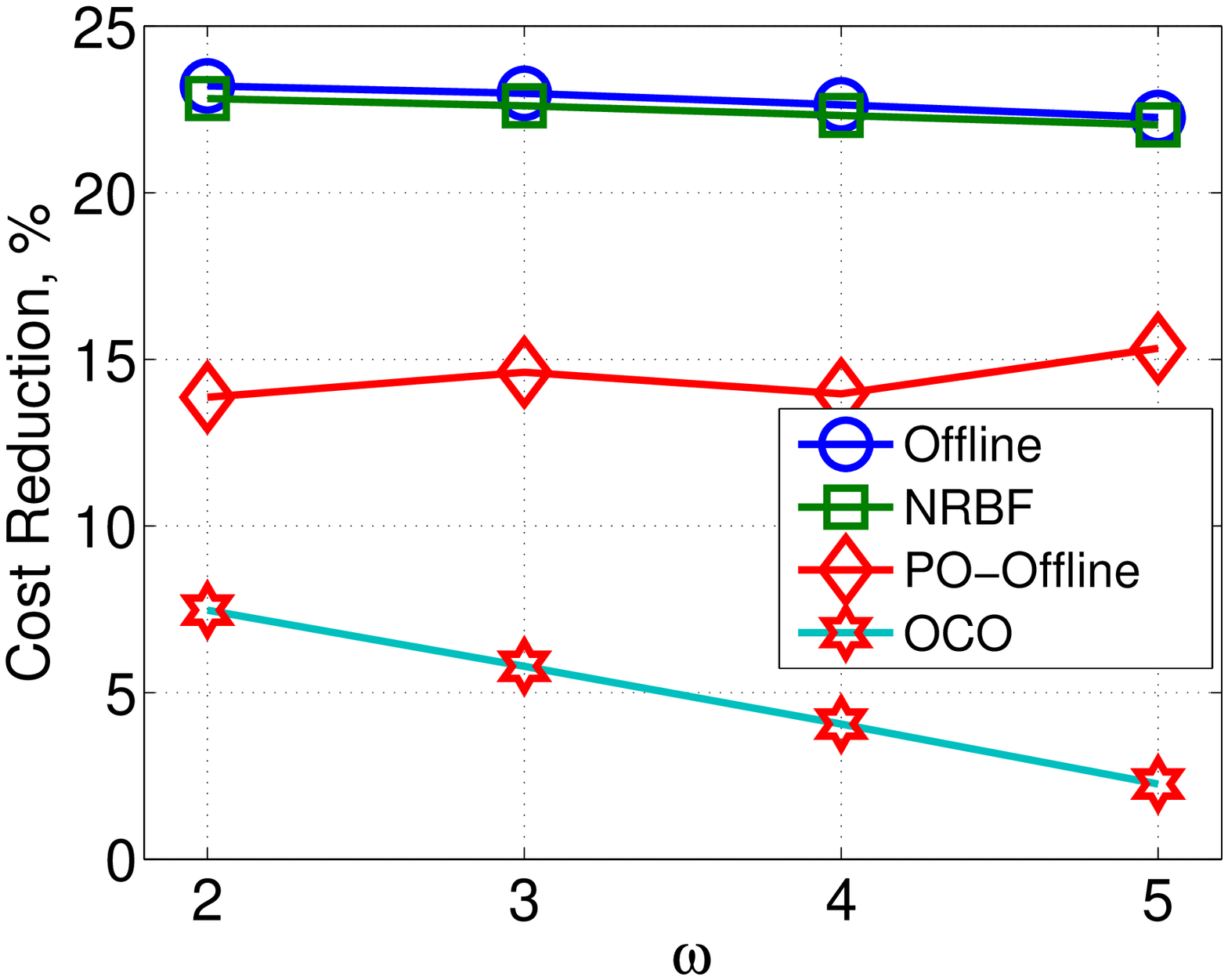}
\caption{Cost reduction vs $\Gamma$}
\label{fig:RRamp}
\end{minipage}%
\begin{minipage}[!htbp]{0.50\columnwidth}
\includegraphics[scale=\myScl]{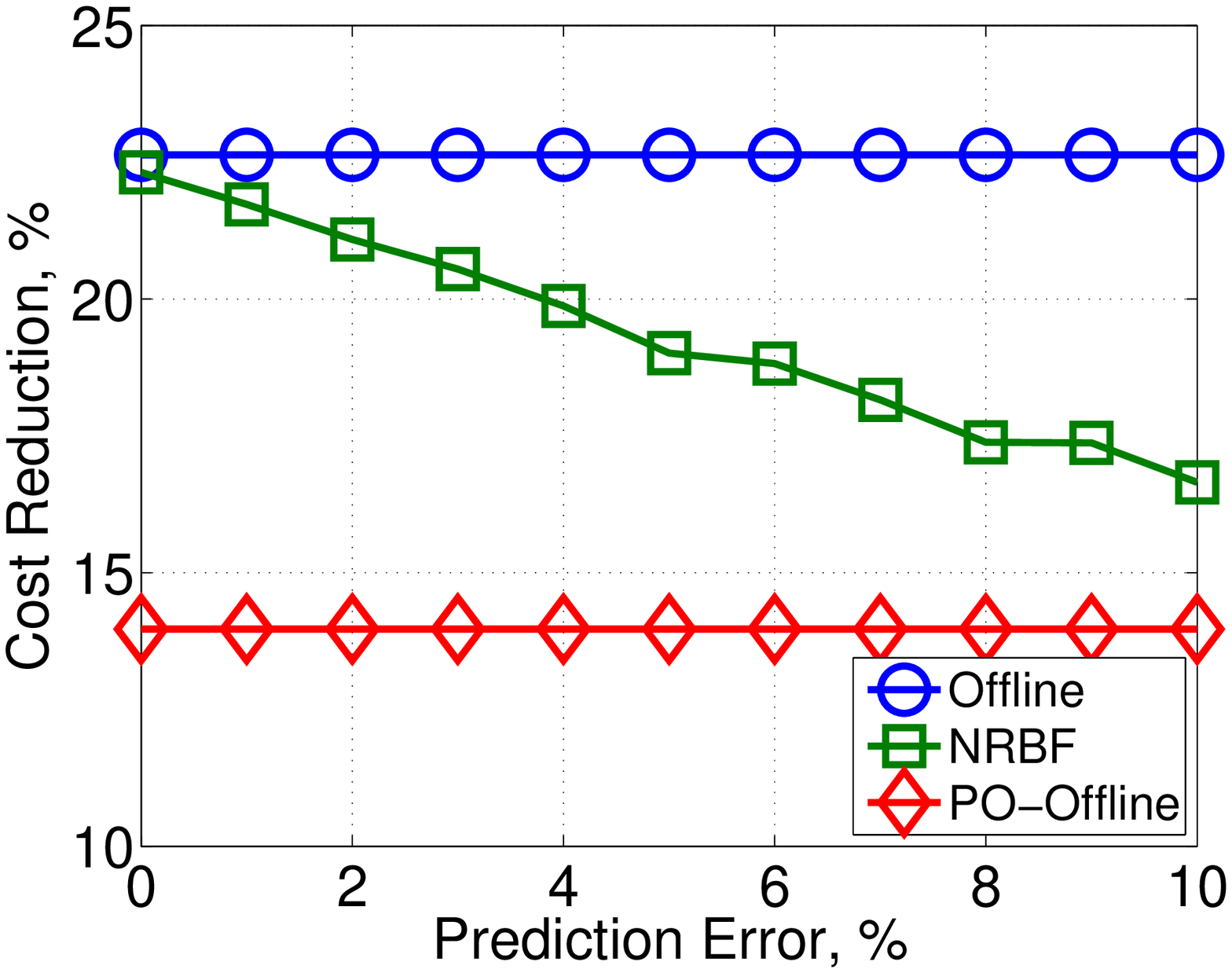}
\caption{Cost reduction vs prediction error}
\label{fig:CRRamp}
\end{minipage}
\end{figure}

\subsection{The Performance of \paonline under Different Peak Prices}
\textbf{Purpose.} To validate the peak charge is non-negligible which motivates our study, we evaluate the performance of our peak-aware algorithm and that of the peak-oblivious one under different peak prices. In particular, in Fig.~\ref{fig:RPeakP}, we depict the cost reduction of different algorithms with the peak price $p_m$ varying from $12.29\$/\textrm{KWh}$ to $21.07\$\textrm{/KWh}$.

\textbf{Observations.} When $p_m$ increases, the cost reduction of our \paonline algorithm increases and is close to the offline optimal, while the reductions of the two peak oblivious algorithms decrease. This observation shows that our \paonline algorithm is more effective and the cost reduction is more significant for microgrids with high peak prices.

\subsection{The Performance of \paonline under Different Local Generation Capacities}\label{part:capa}
\textbf{Purpose.} At first glance, one may imagine that larger local generator leads to larger design space and thus larger cost reduction is expected.
However, as discussed in Sec.~\ref{part:criticalCapa}, this is not the case for online algorithms that do not have the complete future knowledge of price and demand. 
We carry out an experiment to verify and elaborate the observation. For convenience, we define
$\rho = C/\max{e(t)}$ as the ratio of local generation capacity over the peak net demand and change $\rho$ from $20\%$ to $100\%$. The result is shown in Fig.~\ref{fig:ReductionCapa}.

\textbf{Observations.} The results for \offline  and \poonline algorithms follow the intuition that more local capacity brings more cost reduction.
For \textsf{PA-Online}, however, we observe that the cost reduction increases when $\rho$ increases from $20\%$ to $60\%$, and degrades as $\rho$ continues to increase from $60\%$ to $100\%$. 
As we discussed in Sec.~\ref{part:criticalCapa}, there exists a critical local generation capacity $\tilde{C}$ beyond which the peak charge and the overall cost will not decrease further. In Fig.~\ref{fig:VPCapa}, we report the peak grid demand $\max{v(t)}$ versus $\rho$ just for \offline algorithm. Results show that the peak value of $v(t)$ does not decrease as $\rho$ increases from $60\%$ to $100\%$, evincing that $\tilde{C}$ is about $60\%$ of the maximum demand in this case. The online algorithm, however, is unaware of $\tilde{C}$. As discussed in Sec.~\ref{part:criticalCapa}, $\tilde{C}$ can be computed by solving problem \textbf{FS-PAED} in an offline manner.

The online algorithm, without knowing $\tilde{C}$ and with the tendency of reducing the peak charge by using more expensive local generation, will try to exploit the entire local generation capacity until the cost-benefit break even point is reached, which turns out to be less economic and deviate from the offline optimal. As a result, for the online algorithm, larger capacity may incur higher operating cost, as shown in Fig.~\ref{fig:ReductionCapa}.

This experiment, together with the discussions in Sec.~\ref{part:criticalCapa}, show that it is important for the microgrid operator to set the local generation capacity right at $\tilde{C}$ to cope with online algorithms to achieve maximum cost reduction. A possible way to set $\tilde{C}$ is to use the historical data as the input to the offline algorithm and obtain the critical capacity.

\subsection{The Performance of \textbf{RED}}

\textbf{Purpose.} In this part, we compare the empirical performance of the deterministic online algorithm \textbf{BED} and randomized online algorithm \textbf{RED} under different local capacities. The cost of \textbf{RED} is computed by running the algorithm 1000 times and taking the average. 

\textbf{Observations.} Even though \textbf{RED} is better than \textbf{BED} in terms of competitive ratio, it is not always the case empirically because the competitive ratio only characterizes the performance in the worst case. As we can see, when $\rho$ is less than 80\%, \textbf{BED} outperforms \textbf{RED} while the other way around if $\rho$ is larger than 80\%. Furthermore, when $\rho$ increases from 80\% to 100\%, the performance of \textbf{BED} degrades drastically, while the cost reduction of \textbf{RED} almost remains the same. This observation indicates that, to ensure that \textbf{BED} has good performance, we need to carefully determine the local capacity but additional local capacity will not harm \textbf{RED} much, which can be viewed as another advantage of \textbf{RED}. 

\begin{figure}[t]
\centering
\begin{minipage}[!htbp]{0.48\columnwidth}
\includegraphics[scale=\myScl]{ReductionRand.eps}
\caption{Comparison of \textbf{RED} and \textbf{BED} under different local capacities}
\label{fig:RRand}
\end{minipage}
\begin{minipage}[!htbp]{0.50\columnwidth}
\includegraphics[scale=\myScl]{MicrogridTestbed.eps}
\caption{Cost reduction with different peak price on a small microgrid testbed}
\label{fig:RMicrogrid}
\end{minipage}%
\end{figure}

\subsection{Empirical Evaluations Using Traces from a Real-world Small-scale Microgrid}
\textbf{Purpose.} In this simulation, we replace the previous trace with a new one, which is from a test-bed building at College of Engineering Center for Environmental Research and Technology of UC Riverside and spans three months from May to July. The building has 20 office rooms, 2 conference rooms, one large open area with cubicles, and 7 other miscellaneous rooms. The building HVAC system consists of 16 packaged rooftop units. In addition to its small scale, the building is connected to solar PV and several charging stations, both of which introduce additional demand uncertainties. As a result, the demand fluctuates more than the previous data set we use. 
The simulation result is shown in Fig~\ref{fig:RMicrogrid}.

\textbf{Observations.} On this new data set, the cost reduction is more significant (at least 40\% for the offline case) than the previous results and will increase with larger peak price $p_m$. This result indicates that peak-aware scheduling is more beneficial with more fluctuating demand and larger peak prices.

\subsection{The Impact of Ramping Constraints}\label{part:ramp}
\textbf{Purpose.} The experiment is devoted to explore the performance of our algorithm \algc for slow-responding generators. We firstly change the ramping constraint such that $\Gamma$ increases from $2$ to $5$ and evaluate the performance of the algorithms. We recall the meaning of $\Gamma$, as defined in~Sec.~\ref{sec:slow}, is that it takes $\Gamma$ slots for local generators to ramp up from zero to full capacity or down from full capacity to zero.
The result is demonstrated in Fig.~\ref{fig:RRamp}.

Secondly, we relax the assumption that we can perfectly predict near future information. We add a zero mean gaussian noise to the net demand as the predicted input for our algorithm. Note that we can always satisfy unexpected demand by purchasing electricity from the external grid. We evaluate the performance of \algc with standard deviation of the gaussian noise increasing from $0\%$ to $10\%$ of the actual demand. We use $\Gamma = 4$ for the experiment and \algc utilizes 3-slot looking ahead demand and price information. The simulation results are shown in Fig.~\ref{fig:CRRamp}.

\textbf{Observations.} From Fig.~\ref{fig:RRamp}, we observe that all cost reductions decrease as ramping constraints are more strict, while the performance of \algc is always close to the offline optimal. This shows the effectiveness of using limited prediction in combating the difficulty in online scheduling caused by ramping constraints. Moreover, Fig.~\ref{fig:CRRamp} shows that the prediction error will degrade the performance of \algc. However, the performance is still significantly better than the peak-oblivious scheduling.


\section{Related Work}\label{sec:relatedwork}

Microgrid is attracting substantial attention from both academic and industrial communities due to its economic and environmental benefits, evidenced by a number of real-world pilot microgrid projects~\cite{barnes2007real}.

With the penetration of renewable energy in microgrids, conventional economic dispatching approaches based on accurate demand prediction for power grid \cite{gaing2003particle} are not applicable as the net demand inherits substantial uncertainty from the renewable generation and is hard to predict accurately. Online algorithm design is advocated by researchers to offer a paradigm-shrift alternative. Online convex optimization \cite{narayanaswamy2012online}, Lyapunov optimization \cite{huang2013adaptive}, and competitive analysis \cite{minghua_sigmetrics} are the main approaches adopted for online energy generation scheduling in microgrid. The authors in \cite{minghua_sigmetrics} study the unit commitment and economic dispatching problems of microgrid under the volume charging model. Our work considers economic dispatching under both the peak charging and volume charging model.


The cost minimization problem based on real-world peak charging scheme has been considered for microgird scenario in \cite{mishra2013scaling}, by utilizing Energy Storage Systems to cut off the peak.
In contrast, our work tackles the problem using local generators to shave the peak.  The cost minimization with the same pricing mechanism taken into account is also studied for data centers in \cite{xu2013reducing,wang2014hierarchical}, for EV charging in \cite{Minghua_allerton}, and for content delivery in~\cite{adler2011algorithms}. For fast-responding generator scenario, the economic dispatching problem we study in this paper can be considered as a generalization of the classic  Bahncard problem \cite{bahncard}. The Bahncard problem and its solutions have also found application in the instance acquisition problem of cloud computing~\cite{reserve}.



\section{Conclusion and Future Work}\label{sec:conclusion}

In this paper, we devised peak-aware online economic dispatching algorithms for microgrids, with peak charging model taken into account. In the fast-responding generator scenario, we developed both deterministic and randomized online algorithms with best possible competitive ratios following a divide-and-conquer approach. Our results not only characterized the fundamental price of uncertainty for the problem, but also served as a building block for designing online algorithms for the slow-responding generator scenario, where we proposed to tackle the ramping constraints using a limited look-ahead window. In addition to sound theoretical performance guarantees, the empirical evaluations based on real-world traces also corroborated our claim on the importance of peak-awareness in scheduling.

An interesting future direction is to study the microgrid economic dispatching problem under accurate or noisy prediction of future demand and renewable generation within a limited look-ahead window.
%


\section*{Acknowledgement}
The authors would like to thank Qi Zhu for the discussions on peak charging in the initial stage of the study.
The first author wants to thank Shaoquan Zhang for proofreading the paper.
The work described in this paper was supported by National Basic Research Program of China (Project No. 2013CB336700) and the University Grants Committee of the Hong Kong Special Administrative Region, China (General Research Fund Project No. 14201014 and Theme-based Research Scheme Project No. T23-407/13-N).

\scriptsize{
\bibliographystyle{abbrv}
\bibliography{ref}
}
\end{document}